\newcommand{\extended}[1]{}    
\newcommand{\short}[1]{#1}     
\definecolor{lightyellow}{cmyk}{0,0,0.2,0}
\newcommand{\AMC}{\lan{AMC}}
\newcommand{\AEMC}{\lan{AE\ensuremath{\mu}C}}
\newcommand{\AMCi}{\AEMC}
\newcommand{\afAMCi}{\lan{af\text{-}AE\ensuremath{\mu}C}}
\newcommand{\diam}[1]{\ensuremath{\langle #1 \rangle}}
\newcommand{\diamPers}[1]{\ensuremath{\langle #1 \rangle^\circ}}
\newcommand{\diamStead}[1]{\ensuremath{\langle #1 \rangle^\bullet}}
\newcommand{\prot}{d}
\newcommand{\outf}{o}
\newcommand{\outir}{\mathit{out}^{\ir}}
\newcommand{\strats}{\Sigma}
\newcommand{\xvars}{\mathcal{V}\!\mathit{ars}}
\newcommand{\xvarsvals}{\mathcal{V}\!\mathit{als}}
\newcommand{\knowmod}{{K}}
\newcommand{\everyknowmod}{{E}}
\newcommand{\commonknowmod}{{C}}
\newcommand{\xval}{\mathcal{V}}
\newcommand{\sink}{\mathit{sink}}
\newcommand{\qprop}{\bar{Q}}
\newcommand{\Reach}{\mathit{Reach}}
\newcommand{\trlow}{tr}
\newcommand{\trup}{\mathit{TR}}
\newcommand{\Mvote}{M_{vote}}
\newcommand{\onlabel}[1]{\colorbox{white}{{{#1}}}}
\newcommand{\bend}{20}
\newcommand{\bendtwo}{25}
\definecolor{lightgrey}{rgb}{0.8,0.8,0.8}
\definecolor{grey}{rgb}{0.6,0.6,0.6}
\definecolor{darkgrey}{rgb}{0.4,0.4,0.4}
\definecolor{darkgreen}{rgb}{0,0.6,0}
\renewcommand{\ICGS}{CEGS\xspace}
\renewcommand{\phi}{\varphi\xspace}
\newcommand{\tmout}{{\small timeout}}
\newcommand{\South}{\textbf{S}\xspace}
\newcommand{\West}{\textbf{W}\xspace}
\newcommand{\North}{\textbf{N}\xspace}
\newcommand{\East}{\textbf{E}\xspace}
\newcolumntype{H}{>{\setbox0=\hbox\bgroup}c<{\egroup}@{}}
\newcommand{\wj}[1]{\textcolor{darkgreen}{\textbf{WJ:} #1\ }}
\renewcommand{\wj}[1]{}
\begin{document}

\title{Fixpoint Approximation of Strategic Abilities under Imperfect Information}

\numberofauthors{3}
\author{
  Wojciech Jamroga\\
  \affaddr{Institute of Computer Science,}\\
  \affaddr{Polish Academy of Sciences}\\
  \email{w.jamroga@ipipan.waw.pl}
\alignauthor
  Micha\l\ Knapik\\
  \affaddr{Institute of Computer Science,}\\
  \affaddr{Polish Academy of Sciences}\\
  \email{michal.knapik@ipipan.waw.pl}
\alignauthor
  Damian Kurpiewski\\
  \affaddr{Institute of Computer Science,}\\
  \affaddr{Polish Academy of Sciences}\\
  \email{damian.kurpiewski@ipipan.waw.pl}
}

\maketitle

\begin{abstract}
Model checking of strategic ability under imperfect information is known to be hard.
The complexity results range from \NP-completeness to undecidability, depending on the precise setup of the problem. No less importantly, fixpoint equivalences do not generally hold for imperfect information strategies, which seriously hampers incremental synthesis of winning strategies.

In this paper, we propose translations of \ATLir formulae that provide lower and upper bounds for their truth values, and are cheaper to verify than the original specifications.
That is, if the expression is verified as true then the corresponding formula of \ATLir should also hold in the given model.
We begin by showing where the straightforward approach does not work. Then, we propose how it can be modified to obtain guaranteed lower bounds. To this end, we alter the next-step operator in such a way that traversing one's indistinguishability relation is seen as atomic activity.
Most interestingly, the lower approximation is provided by a fixpoint expression that uses a nonstandard variant of the next-step ability operator.
We show the correctness of the translations, establish their computational complexity, and validate the approach by experiments with a scalable scenario of Bridge play.

\medskip\noindent
\textbf{The paper will appear in: \emph{S. Das, E. Durfee, K. Larson, M. Winikoff (eds.), Proceedings of the 16th International Conference on Autonomous Agents and Multiagent Systems (AAMAS 2017),
May 8--12, 2017, Sao Paulo, Brazil}}.
\end{abstract}

%

\printccsdesc


\keywords{strategic ability, alternating-time temporal logic, imperfect information, model checking, alternating mu-calculus}


\section{Introduction}\label{secIntro}

\extended{
  Multi-agent systems describe interactions of multiple entities called \emph{agents}, often assumed to be intelligent
  and autonomous.
  More and more practical problems are being modeled and solved under paradigms related to multi-agent systems.
  The examples of their industrial applications include space mission planning and air control\cite{ClanceySSBRHS07,Rouff2007,Tumer2007},
  defense and security~\cite{Gascuena2011,Pechoucek2008},
  logistics and production planning~\cite{Himoff2005,Jacobi2005}, and many others.
  According to the 2008 review~\cite{Pechoucek2008rev}, companies like Rockwell Automation, Volkswagen,
  SkodaAuto, Daimler AG, NASA, and JPL either routinely use or are actively developing multi-agent
  based solutions.
  Given the proliferation of multi-agent systems in the modern world, their automated verification, i.e., ensuring of
  the compliance with their specification, becomes an important task.
  Some model checking tools (e.g., MCK~\cite{mck}, MCMAS~\cite{LomuscioQR09})
  have been created or extended to accept multi-agent systems as their input models;
  some tools aim at the verification of programs specified in multi-agent programming
  languages (e.g., AgentSpeak in Jason~\cite{Bordini03a}).

  In this paper we deal with the properties specified in Alternating-Time Temporal Logic~\cite{AlurHK02} (\ATL).
  In its basic version, this logic allows to specify strategic properties of groups of agents
  under assumption of the perfect knowledge about the state of affairs, i.e., an agent is able to
  observe the world in its entirety. One can argue  that such a scenario is rather
  unrealistic, hence several versions of semantics have been introduced~\cite{Schobbens04}
  to take into account the limited knowledge of the agents. This, however, typically leads
  to an increase of the complexity of the task of model checking as compared to vanilla \ATL,
  where it can be done in linear time in the size of the model and the formula.
} 

There is a growing number of works that study the syntactic and semantic variants of the \emph{strategic logic} \ATL for agents with imperfect information~\cite{Agotnes15handbook}.
The contributions are mainly theoretical, and include results concerning the conceptual soundness of a given semantics\extended{ of ability}~\cite{Schobbens04ATL,Jamroga03FAMAS,Agotnes04atel,Jamroga04ATEL,Dima10communicating,Guelev12stratcontexts,Agotnes15handbook}, meta-logical properties~\cite{Guelev11atl-distrknowldge,Bulling14comparing-jaamas}, and the complexity of model checking~\cite{Schobbens04ATL,Jamroga06atlir-eumas\extended{,Jamroga06mis-tr},Guelev11atl-distrknowldge,Hoek06practicalmcheck,Dima11undecidable,Bulling10verification}.
However, there is relatively little research on the\extended{ actual} \emph{use} of the logics, in particular on practical algorithms for reasoning and/or verification in scenarios where agents have a limited view of the world.

This is somewhat easy to understand, since model checking of \ATL variants with imperfect information has been proved \Deltwo- to \Pspace-complete for agents playing memoryless\extended{ (a.k.a.~positional)} strategies~\cite{Schobbens04ATL,Jamroga06atlir-eumas,Bulling10verification} and \EXPTIME-complete to undecidable for agents with perfect recall of the past~\cite{Dima11undecidable,Guelev11atl-distrknowldge}.
Moreover, the imperfect information semantics of \ATL does not admit alternation-free fixpoint characterizations~\cite{Bulling11mu-ijcai,Dima14mucalc,Dima15fallmu}, which makes incremental synthesis of strategies impossible, or at least difficult to achieve.
Some early attempts at verification of imperfect information strategies made their way into the MCMAS model-checker~\cite{Lomuscio06uniform,Raimondi06phd,Lomuscio09mcmas,Lomuscio15mcmas}, but the issue was never at the heart of the tool.
More dedicated attempts began to emerge only recently~\cite{Pilecki14synthesis,Busard14improving,Huang14symbolic-epist,Busard15reasoning}.
Up until now, experimental results confirm that the initial intuition was right: model checking of strategic modalities for imperfect information is hard, and dealing with it requires innovative algorithms and verification techniques.

In this paper, we propose that in some instances, instead of the exact model checking, it suffices to provide an upper and/or lower bound for the output.
The intuition for the upper bound is straightforward: instead of checking existence of an imperfect information strategy, we can look for a perfect information strategy that obtains the same goal. If the latter is false, the former must be false too.
Finding a reasonable lower bound is nontrivial, but we construct one by means of a fixpoint expression in alternating epistemic mu-calculus.
We begin by showing that the straightforward fixpoint approach does not work. Then, we propose how it can be modified to obtain guaranteed lower bounds. To this end, we alter the next-step operator in such a way that traversing the appropriate epistemic neighborhood is seen as an atomic activity.
We show the correctness of the translations, establish their computational complexity, and validate the approach by experiments with some scalable scenarios.

\extended{
  One idea that has not been properly explored is that of alternating-time epistemic mu-calculus (\AEMC)~\cite{Bulling11mu-ijcai}. Since fixpoint equivalences do not hold under imperfect information, it follows that standard fixpoint translations of \ATL modalities lead to a different interpretation of strategic ability. In fact, it can be argued that they capture existence of \emph{recomputable} winning strategies. However, what especially interests us in the context of model checking is that they can make model checking computationally cheaper. Verification of \AEMC is in general between \NP and \Deltwo, but the scope of backtracking is much smaller than for \ATL with imperfect information, as it includes only the actions starting from a given indistinguishability class rather than all the actions in the model. Moreover, for coalitions of up to 2 agents the model checking problem is in \Ptime~\cite{Bulling11mu-ijcai}.

  =====

  As shown in~\cite{Bulling11mu-ijcai}, there is no clear correspondence between the validity of the formulae
  specified in \ATLir and their natural fixed-point counterparts specified
  in Alternating Epistemic Mu-Calculus (\AMCi, for short).
  This is unlike in \ATL that can be embedded into Alternating Epistemic Mu-Calculus
  (\AMC).
  We analyse the reasons for this lack of correspondence and
  propose a novel method for underapproximating \ATLir by means of fixed-point calculations.
  To this end we introduce a special version of the next-step operator,
  called Persistent Imperfect Next-Step Operator $\diamStead{\cdot}$ and
  show how it can be used to define a new version of reachability that carries to \ATLir.
  We extend the intuitions built during the analysis of the reachability
  to a wider set of formulae and define a new logic,
  called Approximative \ATLir, that underapproximates \ATLir.

  To the best knowledge of the authors this is the first successful attempt at underapproximating \ATLir using fixed-point methods.
} 

\extended{
  \subsection{Related Work}

  Alternating-Time Temporal Logic and
  Alternating Mu-Calculus (\AMC) are introduced in~\cite{AlurHK02}.
  \ATL deals with strategic abilities of coalitions of agents,
  while \AMC
  combines the next-step operator of \ATL with the operator
  of the least fixed-point $\mu$. In~\cite{AlurHK02} the authors show that \AMC is
  strictly more expressive than the original \ATL.
  However, the semantics of \ATL is also considered in
  several flavours that rely upon the definition of strategy.
  In~\cite{Schobbens04} a natural classification of these versions is presented,
  depending on
  (1) whether the agents make a decision about the next step based on the
  history of the visited states or based on the current state;
  (2) whether the agents can observe the entire state of affairs or only their
  local epistemic neighbourhood.
  In this paper we focus on the version of \ATL with memoryless
  strategies and the agents that have a limited knowledge about the world,
  i.e., \ATLir in the nomenclature of~\cite{Schobbens04}.
  In~\cite{Bulling11mu-ijcai} Alternating Epistemic Mu-Calculus,
  a fixed-point logic built on top of the next-step operator of \ATLir
  is presented and investigated;
  it is shown that the expressivity of \ATLir is incomparable with the expressivity of \afAMCi.
}

\section{Verifying Strategic Ability}\label{secPrelims}


In this section we provide an overview of the relevant variants of \ATL\extended{, and the corresponding complexity results for model checking}. We refer the \extended{interested reader }to~\cite{Alur02ATL,Hoek02ATEL,Schobbens04ATL,Bulling11mu-ijcai,Jamroga15specificationMAS} for details.

\subsection{Models, Strategies, Outcomes}


\extended{We interpret specifications over a variant of transition systems where transitions are labeled with combinations of actions, one per agent. Moreover, epistemic relations are used to indicate states that look the same to a given agent.
Formally, a}%
\short{A} \emph{concurrent epistemic game structure} or \emph{\ICGS}\extended{~\cite{Alur02ATL,Hoek02ATEL}} is given by
$\model = \tuple{\Agt, \States, \Props, \V, Act, d, o,\{\sim_a \mid a\in \Agt\}}$
which includes a nonempty finite set of all agents $\Agt = \set{1,\dots,k}$, a nonempty set of states $\States$, a set of atomic propositions $\Props$ and their valuation $\V:\Props\rightarrow \powerset{\States}$, and a nonempty finite set of (atomic) actions $\Actions$. Function $\prot : \Agt \times \States \rightarrow \powerset{Act}$ defines nonempty sets of actions available to agents at each state, and $o$ is a (deterministic) transition function that assigns the outcome state $q' = o(q,\alpha_1,\dots,\alpha_k)$ to state $q$ and a tuple of actions $\langle\alpha_1, \dots, \alpha_k\rangle$\extended{ for $\alpha_i \in d(i,q)$ and $1\leq i\leq k$,} that can be executed by $\Agt$ in $q$.
We write $\prot_a(q)$ instead of $\prot(a,q)$\extended{, and define $\prot_A(q) = \prod_{a\in A}\prot_a(q)$ for each $A\subseteq\Agt, q\in\States$}.
Every $\sim_a\subseteq \States\times\States$ is an epistemic equivalence relation.
The \ICGS is assumed to be \emph{uniform}, in the sense that $q\sim_a q'$ implies $d_a(q)=d_a(q')$.
\extended{Note that perfect information can be modeled by assuming each $\sim_a$ to be the minimal reflexive relation.}

\begin{figure}[h]
\hspace{-0.5cm}
\begin{tikzpicture}[>=latex,scale=1.15]
 \input{voting.tex}
\end{tikzpicture}
\caption{A simple model of voting and coercion}
\label{fig:votingmodel}
\end{figure}

\begin{example}\label{ex:voting-model}
Consider a very simple voting scenario with two agents: the voter $v$ and the coercer $c$. The voter casts a vote for a selected candidate $i\in\set{1,\dots,n}$ (action $vote_i$).
Upon exit from the polling station, the voter can hand in a proof of how she voted to the coercer (action $give$) or refuse to hand in the proof (action $ng$). The proof may be a certified receipt from the election authorities, a picture of the ballot taken with a smartphone, etc.\extended{ -- anything that the coercer will consider believable.}
After that, the coercer can either punish the voter (\extended{action }$pun$) or not punish (\extended{action }$np$).

The \ICGS\ $\Mvote$ modeling the scenario for $n=2$ is shown in Figure~\ref{fig:votingmodel}.
Proposition $\prop{vote_i}$ labels states where the voter has already voted for candidate $i$.
Proposition $\prop{pun}$ indicates states where \short{$v$}\extended{the voter} has been punished.
The indistinguishability relation for the coercer is depicted by dotted lines.
\end{example}

A \emph{strategy} of agent $a\in\Agt$ is a conditional plan that specifies what $a$ is going to do in every possible situation.
Formally, a \emph{perfect information memoryless strategy} for $a$ can be represented by a function $s_a : \States\to\Actions$ satisfying $s_a(q)\in\prot_a(q)$ for each $q\in\States$.
An \emph{imperfect information memoryless strategy} additionally satisfies that $s_a(q) = s_a(q')$ whenever $q\sim_a q'$.
Following~\cite{Schobbens04ATL}, we refer to the former as \emph{\Ir-strategies}, and to the latter as \emph{\ir-strategies}.

A \emph{collective $x$-strategy} $s_A$, for \extended{coalition }$A\subseteq\Agt$ and \extended{strategy type }$x\in\set{\Ir,\ir}$, is a tuple of individual $x$-strategies, one per agent from $A$. The set of all such strategies is denoted by $\strats_A^x$.
By $s_A|_a$ we denote the strategy of agent $a\in A$ selected from $s_A$.

Given two partial functions $f,f' \colon X\fpart Y$, we say that \emph{$f'$ extends $f$} (denoted $f\subseteq f'$) if, whenever $f(x)$ is defined, we have $f(x) = f'(x)$.
A partial function $s_a'\colon\States\fpart\Actions$ is called \emph{a partial $x$-strategy for $a$} if $s_a'$ is extended by some strategy $s_a\in\strats_a^x$.
A collective partial x-strategy $s_A$ is a tuple of partial x-strategies, one per agent from $A$.

A \emph{path} $\lambda=q_0q_1q_2\dots$ is an infinite sequence of
states such that there is a transition between each $q_i,q_{i+1}$.
We use $\lambda[i]$ to denote the $i$th position on path $\lambda$
(starting from $i=0$)\extended{ and $\lambda[i,\infty]$ to denote the subpath
of $\lambda$ starting from $i$}.
%
Function $out(q,s_A)$ returns the set of all paths that can result from the execution of strategy $s_A$ from state $q$\short{. }\extended{, defined formally as follows:
\begin{description}
\item[$out(q,s_A) =$]  $\{ \lambda=q_0,q_1,q_2\ldots \mid
      q_0=q$ and for each $i=0,1,\ldots$ there exists
      $\tuple{\alpha^{i}_{a_1},\ldots,\alpha^{i}_{a_k}}$ such that
      $\alpha^{i}_{a} \in d_a(q_{i})$ for every $a\in \Agt$,
      and $\alpha^{i}_{a} = s_A[a](q_{i})$ for every $a\in A$,
      and $q_{i+1} = o(q_{i},\alpha^{i}_{a_1},\ldots,\alpha^{i}_{a_k}) \}$.
\end{description}
}
We will sometimes write $out^\Ir(q,s_A)$ instead of $out(q,s_A)$.
Moreover, function $out^{\ir}(q,s_A) = \bigcup_{a\in A}\bigcup_{q\sim_a q'}\out(q',s_A)$ collects all the outcome paths that start from states that are indistinguishable from $q$ to at least one agent in $A$.

\subsection{Alternating-Time Temporal Logic}

We use a variant of \ATL that explicitly distinguishes between perfect and imperfect information abilities.
Formally, the syntax is defined by the following grammar:
\[
\phi ::= p \mid \neg\phi \mid \phi\land\phi \mid \coop[x]{A}\Next\phi
\mid \coop[x]{A}\Always\phi \mid \coop[x]{A}\phi \Until\phi,
\]
where $x\in\set{\Ir,\ir}$, $p \in \Props$ and $A\subseteq\Agt$.
We read $\coop[\ir]{A}\gamma$ as \emph{``$A$ can identify and execute a strategy that enforces $\gamma$,''}
$\Next$ as \emph{``in the next state,''} $\Always$ as \emph{``now and always in the future,''} and $\Until$ as \emph{``until.''}
$\coop[\Ir]{A}\gamma$ can be read as \emph{``$A$ might be able to bring about $\gamma$ if allowed to make lucky guesses along the way.''}
We focus on the kind of ability expressed by $\coop[\ir]{A}$. The other strategic modality (i.e., $\coop[\Ir]{A}$) will prove useful when approximating $\coop[\ir]{A}$.

The semantics of \ATL can be defined as follows:
\begin{itemize2}
\item $\model,q\satisf p$ iff $q\in\V(p)$,
\item $\model,q\satisf\neg\phi$ iff $\model,q\not\satisf\phi$,
\item $\model,q\satisf\phi\land\psi$ iff $\model,q\satisf\phi$ and $\model,q\satisf\psi$,
\item $\model,q\satisf\coop[x]{A}\Next\phi$ iff there exists $s_A\in\strats_A^x$ such that for all $\lambda\in\out^x(q,s_A)$ we have $\model,\lambda[1]\satisf\phi$,
\item $\model,q\satisf\coop[x]{A}\Always\phi$ iff there exists $s_A\in\strats_A^x$ such that for all $\lambda\in\out^x(q,s_A)$
    and $i\in\Nat$ we have $\model,\lambda[i]\satisf\phi$,
\item $\model,q\satisf\coop[x]{A}\psi\Until\phi$ iff there exists $s_A\in\strats_A^x$ such that for all $\lambda\in\out^x(q,s_A)$ there is $i\in\Nat$ for which $\model,\lambda[i]\satisf\phi$ and $\model,\lambda[j]\satisf\psi$ for all $0\le j < i$.
\end{itemize2}
We will often write $\diam{A}\varphi$ instead of $\coop[\ir]{A}\Next\varphi$ to express one-step abilities under imperfect information.
Additionally, we define \emph{``now or sometime in the future''} as $\Sometm\varphi \equiv \top\Until\varphi$.
\extended{
  It is easy to see that $\model,q\satisf\coop[x]{A} \Sometm\phi$ iff there exists \extended{a collective strategy }$s_A\in\strats_A^x$ such that
  following each path $\lambda\in\out^x(q, s_A)$ leads to some state satisfying $\phi$.
  In that case, we say that $\phi$ is \emph{$x$-reachable from $q$}.
} 

\begin{example}\label{ex:voting-formulae}
Consider model $\Mvote$ from Example~\ref{ex:voting-model}. The following formula expresses that the coercer can ensure that the voter will eventually either have voted for candidate $i$ (presumably chosen by the coercer for the voter to vote for) or be punished:\
$\coop[\ir]{c}\Sometm \big(\neg\prop{pun} \then \prop{vote_i}\big)$.
We note that it holds in $\Mvote,q_0$ for any $i=1,2$.
A strategy for $c$ that validates the property is $s_c(q_3)=np,\ s_c(q_4)=s_c(q_5)=s_c(q_6)=pun$ for $i=1$, and symmetrically for $i=2$.

Consequently, the formula $\coop[\ir]{v}\Always \big(\neg\prop{pun} \land \neg\prop{vote_{i}}\big)$ saying that the voter can avoid voting for candidate $i$ and being punished, is false in $\Mvote,q_0$ for all $i=1,2$.
\end{example}

We refer to the syntactic fragment containing only $\coop[\ir]{A}$ modalities as \ATLir, and to the one containing only $\coop[\Ir]{A}$ modalities as \ATLIr.

\begin{proposition}[\cite{Alur02ATL,Schobbens04ATL,Jamroga06atlir-eumas}]\label{prop:complexity-atl}
Model checking \ATLIr is \Ptime-complete and can be done in time $O(|M|\cdot|\phi|)$ where $|M|$ is the number of transitions in the model and $|\phi|$ is the length of the formula.

Model checking \ATLir is \Deltwo-complete wrt $|M|$ and $|\phi|$.
\end{proposition}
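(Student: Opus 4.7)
The plan for the \ATLIr part is a bottom-up labelling algorithm over the syntax tree of $\phi$: for every subformula $\psi$ compute the set of states where $\psi$ holds, then treat that set as a fresh atomic proposition when handling the parent node. Boolean cases are immediate. For the strategic modalities I would rely on the fixpoint characterisations that are valid under perfect information: $\coop[\Ir]{A}\Next\phi$ is a controllable-predecessor operation (at each state, check whether $A$ has a joint action such that \emph{every} completion by $\Agt\setminus A$ lands in the already-labelled set), $\coop[\Ir]{A}\Always\phi$ equals $\nu Z.\,\phi\wedge\coop[\Ir]{A}\Next Z$, and $\coop[\Ir]{A}\psi\Until\phi$ equals $\mu Z.\,\phi\vee(\psi\wedge\coop[\Ir]{A}\Next Z)$. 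A worklist that, per state, maintains a counter of still-surviving joint-action profiles ensures that every transition of $M$ is touched a bounded number of times during each fixpoint, yielding the $O(|M|\cdot|\phi|)$ bound overall. \Ptime-hardness is inherited from CTL, which embeds into \ATLIr by taking $A=\Agt$ for existential and $A=\emptyset$ for universal path quantifiers.

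For the \Deltwo upper bound on \ATLir I would again label bottom-up, with already-handled subformulae treated as atomic. The only step that escapes \Ptime is a strategic modality $\coop[\ir]{A}\gamma$ for a one-temporal-operator $\gamma$; this is decided by a single \NP\ oracle query — guess a memoryless uniform ir-strategy $s_A$ (polynomially many bits, namely one action per $\sim_a$-class for each $a\in A$), prune the transitions of $M$ incompatible with $s_A$, and verify in \Ptime that every path in $out^{\ir}(q,s_A)$ satisfies $\gamma$, which collapses to a CTL-style check on the pruned model. Wrapping the polynomial-time labelling around such oracle queries places the whole problem in $\Ptime^{\NP}=\Deltwo$.

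For \Deltwo-hardness I would reproduce the reduction from \textsc{SNSat} (sequentially nested SAT, a canonical \Deltwo-complete problem) that appears in~\cite{Jamroga06atlir-eumas}: each propositional variable is encoded by an agent whose single non-trivial information set offers two actions, and alternating $\coop[\ir]{a_i}$ modalities with intervening negations force the evaluator to solve a linear number of SAT instances in parallel, whose outcomes are then combined by a propositional top layer implemented in the model's valuation. The delicate point — and, in my view, the main obstacle — is arranging the epistemic classes so that a \emph{memoryless uniform} strategy genuinely corresponds to a truth assignment and cannot cheat by choosing different actions in different reachable copies of the same SAT gadget; everything else (the worklist bookkeeping for the linear-time algorithm, the \Ptime verification of a guessed strategy, and the inductive glue above the oracle) is routine.
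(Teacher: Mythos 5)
This proposition is imported from the literature (the paper cites \cite{Alur02ATL,Schobbens04ATL,Jamroga06atlir-eumas} and offers no proof of its own), and your sketch correctly reconstructs exactly the standard arguments from those sources: the bottom-up labelling with controllable-predecessor fixpoints and a counter-based worklist for the $O(|M|\cdot|\phi|)$ bound on \ATLIr, the guess-and-verify \NP{} oracle call per uniform memoryless strategy wrapped in polynomial-time labelling for the \Deltwo{} upper bound, and the SNSAT reduction for \Deltwo-hardness. Nothing further is needed; the ``delicate point'' you flag about making uniform strategies correspond to assignments is indeed the crux of the hardness gadget in \cite{Jamroga06atlir-eumas}, and is handled there exactly as you anticipate.
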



\begin{remark}\label{rem:subjective}
The semantics of $\coop[\ir]{A}\gamma$ encodes the notion of ``subjective'' ability~\cite{Schobbens04ATL,Jamroga04ATEL}: the agents must have a successful strategy from all the states that they consider possible when the system is in state $q$. Then, they know that the strategy indeed obtains $\gamma$. The alternative notion of ``objective'' ability~\cite{Bulling14comparing-jaamas} requires\extended{ existence of} a winning strategy from state $q$ alone. We focus on the subjective interpretation, as it is more standard in \ATL and more relevant in game solving (think of a card game, such as poker or bridge: the challenge is to find a strategy that wins \emph{for all possible hands of the opponents}).

Note that if $[q]_{\sim_A^E} = \set{q}$ and $\gamma$ contains no nested strategic modalities, then the subjective and objective semantics of $\coop[\ir]{A}\gamma$ at $q$ coincide.
Moreover, model checking $\coop[\ir]{A}\prop{p_1}\Until\prop{p_2}$ and $\coop[\ir]{A}\Always\prop{p}$ in $M,q$ according to the objective semantics can be easily reduced to the subjective case by adding a spurious initial state $q'$, with transitions to all states in $[q]_{\sim_A^E}$, controlled by a ``dummy'' agent outside $A$~\cite{Pilecki17smc}.
\end{remark}

\subsection{Reasoning about Knowledge}

Having indistinguishability relations in the models, we can interpret knowledge modalities $K_a$ in the standard way:
\begin{itemize}
\item $\model,q\satisf K_a\phi$ iff $\model,q'\satisf\phi$ for all $q$ such that $q\sim_a q'$.
\end{itemize}
The semantics of \emph{``everybody knows'' ($E_A$)} and \emph{common knowledge ($C_A$)} is defined analogously by assuming the relation $\sim_A^E = \bigcup_{a\in A}\sim_a$ to aggregate individual uncertainty in $A$, and $\sim_A^C$ to be the transitive closure of $\sim_A^E$.
Additionally, we take $\sim_\emptyset^E$ to be the minimal reflexive relation.
We also use $[q]_\mathcal{R} = \set{q' \mid q\mathcal{R}q'}$ to denote the image of $q$ wrt relation $\mathcal{R}$.

\begin{example}\label{ex:voting-epist}
The following formulae hold in $\Mvote,q_0$ for any $i=1,2$ by virtue of strategy $s_c$ presented in Example~\ref{ex:voting-formulae}:
\begin{itemize2}
\item
$\coop[\ir]{c}\Sometm \big((\neg K_c\prop{vote_i}) \then \prop{pun}\big)$:
The coercer has a strategy so that, eventually, the voter is punished unless the coercer has learnt that the voter voted as instructed;

\item
$\coop[\ir]{c}\Always \big((K_c \prop{vote_i}) \then \neg\prop{pun}\big)$:
Moreover, the coercer can guarantee that if he learns that the voter obeyed, then the voter will not be punished.

\end{itemize2}
\end{example}

\subsection{Alternating Epistemic Mu-Calculus}\label{sec:aemc}

It is well known that the modalities in \ATLIr have simple fixpoint characterizations~\cite{Alur02ATL}, and hence \ATLIr can be embedded in a variant of $\mu$-calculus with $\coop[\Ir]{A}\Next$ as the basic modality.
At the same time, the analogous variant of $\mu$-calculus for imperfect information has incomparable expressive power to \ATLir~\cite{Bulling11mu-ijcai}, which suggests that, under imperfect information, \ATL and fixpoint specifications provide different views of strategic ability.

Formally, \emph{alternating epistemic $\mu$-calculus (\AMCi)} takes the next-time fragment of \ATLir, possibly with epistemic modalities, and adds the least fixpoint operator $\mu$. The greatest fixpoint operator $\nu$ is defined as dual.
Let  $\xvars$ be a set of second-order variables ranging over $\powerset{\States}$.
The language of \AMCi is defined by the following grammar:
\[
\phi ::= p \mid Z\mid \neg\phi \mid \phi\lor\phi \mid \langle A \rangle \phi
\mid \mu Z(\phi) \mid \knowmod_a,
\]
where $p \in \Props$, $Z\in\xvars$, $a\in\Agt$, $A\subseteq\Agt$,
and the formulae are $Z$--positive, i.e., each free occurrence of $Z$ is in the scope
of an even number of negations.
We define $\nu Z(\phi(Z))\equiv \neg\mu Z(\neg \phi (\neg Z))$.
A formula of \AMCi is \emph{alternation-free} if in its negation normal form
it contains no occurrences of $\nu$ (resp. $\mu$) on any syntactic path
from an occurrence of $\mu Z$ (resp. $\nu Z$) to a bound occurrence of $Z$.
\extended{
  Similarly to~\cite{Bulling11mu-ijcai}, we consider here only the alternation-free fragment of \AMCi, denoted by \afAMCi.

  We evaluate the formulae of \afAMCi with respect to the valuations of $\xvars$,
  i.e., functions $\xval\colon\xvars\to\powerset{\States}$. We denote the set of all the valuations
  of $\xvars$ by $\xvarsvals$.
  If $X\in\xvars$, $Z\subseteq\States$, and $\xval\in\xvarsvals$, then
  by $\xval[X := Z]$ we denote the valuation of $\xvars$ such that
  $\xval[X := Z](Y) = \xval(Y)$ for $Y\ne X$ and $\xval[X := Z](X) = Z$.
}

The denotational semantics of \afAMCi\ (i.e., the alterna\-tion-free fragment of \AMCi) assigns to each formula $\phi$ the set of states $\denotation{\phi}^\model_{\xval}$
where $\phi$ is true under the valuation $\xval\in\xvarsvals$:
\begin{itemize2}
\item $\denotation{p}^\model_{\xval} = \V(p)$,
\qquad\qquad $\denotation{Z}^\model_{\xval} = \xval(Z)$,
\item $\denotation{\neg\phi}^\model_{\xval} = \States\setminus\denotation{\phi}^\model_{\xval}$,
\item $\denotation{\phi\lor\psi}^\model_\xval = \denotation{\phi}^\model_{\xval}\cup\denotation{ \psi}^\model_{\xval}$,
\item $\denotation{\langle A \rangle\phi }^\model_{\xval} =
\{q\in\States \mid \exists s_A\in\strats_A\;\forall\lambda\in\outir_{\model}(q, s_A)\;\\ \lambda[1]\in \denotation{\phi}^\model_{\xval}\}$,
\item $\denotation{\mu Z(\phi)}^\model_{\xval} =
\bigcap\{Q\subseteq\States\mid \denotation{\phi}^\model_{\xval[Z := Q]}\subseteq Q\}$,
\item $\denotation{\knowmod_a \phi}^\model_{\xval} \!=\!
\{q\in\States \mid\forall_{q'}(q'\!\sim_a\! q \text{ implies } q
\!'\in\!\denotation{\phi}^\model_{\xval})\}$\short{.}\extended{,}
\end{itemize2}
\extended{where $\phi\in\afAMCi$, $p \in \Props$, $Z\in\xvars$, $A\subseteq\Agt$, and $a\in\Agt$.}
%
If $\phi$\extended{ is a sentence, i.e., it} contains no free variables, then its validity does not depend on\extended{ the valuation} $\xval$,
and we write $\model,q\models\phi$ instead of $q\in\denotation{\phi}^\model_{\xval}$.

\begin{example}\label{ex:voting-mu}
Consider the \AMCi formula $\mu Z. \big((\neg\prop{pun} \then \prop{vote_i}) \lor \diam{c}Z\big)$, i.e., the ``naive'' fixpoint translation of
the formula $\coop[\ir]{c}\Sometm \big(\neg\prop{pun} \then \prop{vote_i}\big)$ from Example~\ref{ex:voting-formulae}.
The fixpoint computation produces the whole set of states $\States$.
Thus, in particular, $\Mvote,q_0 \models \mu Z. \big((\neg\prop{pun} \then \prop{vote_i}) \lor \diam{c}Z\big)$.
\end{example}

\begin{proposition}[\cite{Bulling11mu-ijcai}]\label{prop:complexity-aemc}
Model checking \afAMCi with strategic modalities for up to 2 agents is \Ptime-complete and can be done in time $O(|\!\!\sim\!\!|\cdot|\phi|)$ where $|\!\!\sim\!\!|$ is the size of the largest equivalence class among $\sim_1,\dots,\sim_k$, and $|\phi|$ is the length of the formula.

For coalitions of size at least 3, the problem is between \NP and \Deltwo wrt $|\!\!\sim\!\!|$ and $|\phi|$.
\end{proposition}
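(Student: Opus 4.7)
The plan is to follow the standard denotational methodology for mu-calculus model checking: compute the denotation of each subformula bottom-up, handling fixpoint operators by iteration on the complete lattice $(\powerset{\States},\subseteq)$. Atomic propositions, Boolean connectives, and the knowledge modality are treated in the usual way, with work per state bounded by $|\!\!\sim\!\!|$. For the least fixpoint, alternation-freeness permits the Emerson--Lei style reuse of intermediate values so that the total iteration count across nested fixpoints is linear in $|\States|\cdot|\phi|$. The only non-routine ingredient is the strategic next-step modality $\langle A\rangle\psi$.

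For the upper bound when every strategic modality in $\phi$ has $|A|\le 2$, I would use the characterization that $q\in\denotation{\langle A\rangle \psi}^{\model}_{\xval}$ iff there exists an assignment of one action per agent $a\in A$ and per equivalence class $[q']_{\sim_a}$ intersecting $[q]_{\sim_A^E}$ such that every resulting joint transition from a state in $[q]_{\sim_A^E}$ lands in $\denotation{\psi}^{\model}_{\xval}$. For $|A|=1$ this reduces to a single enumeration over the actions available in the single relevant class. For $|A|=2$ I would cast the problem as a bipartite consistency check over pairs of equivalence classes of the two agents, exhibiting a decomposition that runs in time $O(|\!\!\sim\!\!|)$ per subformula; aggregating over all subformulae yields the claimed $O(|\!\!\sim\!\!|\cdot|\phi|)$ bound. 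The matching \Ptime-hardness is inherited directly from model checking \ATLIr via Proposition~\ref{prop:complexity-atl}, since \ATLIr embeds into this fragment using only single-agent strategic modalities.

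For coalitions of size at least 3, NP-membership follows by guess-and-verify: I nondeterministically guess a uniform assignment of actions to each relevant equivalence class (a polynomial-size certificate), then verify in polynomial time that every induced successor lies in the denotation of $\psi$. Lifting to full \afAMCi costs at most polynomially many such queries during fixpoint iteration, giving the \Deltwo\ upper bound. The NP lower bound is obtained by a standard reduction from 3-SAT in which three agents control disjoint blocks of variables and the clauses are encoded into the transition structure so that a uniform joint strategy forces $\psi$ exactly when a satisfying assignment exists.

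The main obstacle is the $|A|=2$ case: I must argue that the joint-action search factors through pairs of equivalence classes without the combinatorial explosion that arises for three or more agents. The key insight to exploit is that with two agents each uniformity constraint couples at most two equivalence classes, so the consistency check decomposes pair-by-pair; with three agents, the three-way coupling makes precisely the 3-SAT encoding possible, which simultaneously supplies the NP-hardness lower bound and explains why the $|A|\le 2$ argument cannot be extended.
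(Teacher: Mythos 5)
First, a framing remark: the paper offers no proof of this proposition --- it is imported verbatim from the cited work of Bulling and Jamroga --- so there is nothing in-paper to compare your argument against, and I can only judge the reconstruction on its own terms. Your overall architecture (bottom-up denotational evaluation, Kleene iteration for the alternation-free fixpoints, guess-and-verify for \NP-membership with polynomially many oracle calls yielding \Deltwo, and \Ptime-hardness inherited from the perfect-information case) is the standard route and is sound in outline.

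The genuine gap sits exactly where you flag the crux. Your justification for the $|A|=2$ case --- ``each uniformity constraint couples at most two equivalence classes, so the consistency check decomposes pair-by-pair'' --- is not a valid principle: a system in which every constraint couples two variables is a binary CSP, and binary CSPs, even over bipartite constraint graphs, are \NP-hard in general (graph colouring is the canonical example). What actually makes $A=\set{a,b}$ tractable is the \emph{shape} of the constraint graph. The states relevant to $\diam{A}\psi$ at $q$ are exactly those in $[q]_{\sim_A^E}=[q]_{\sim_a}\cup[q]_{\sim_b}$, so every per-state constraint involves the action variable of at least one of the two ``hub'' classes $[q]_{\sim_a}$, $[q]_{\sim_b}$; the graph is a double star. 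Enumerating the $|\Actions|^2$ hub-action pairs decouples all peripheral classes, each of which is then settled by an independent local existential check. Dually, your diagnosis of the $|A|\ge 3$ hardness (``three-way coupling enables 3-SAT'') misses that ternary constraints are not needed: already after fixing the three hub actions, the states of one hub class induce an arbitrary \emph{binary} CSP between the peripheral classes of the other two agents, which suffices for \NP-hardness. A minor further point: your aggregation cannot literally yield $O(|\!\!\sim\!\!|\cdot|\phi|)$, since the per-state check enumerates action pairs and inspects transitions; the quoted bound suppresses the dependence on $|\Actions|$ and on the model size, and your write-up should say so explicitly.
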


Thus, alternation-free alternating epistemic $\mu$-calculus can be an attractive alternative to \ATLir from the complexity point of view.
Unfortunately, formulae of \ATLir admit no universal translations to \afAMCi.
Formally, it was proved in~\cite[Proposition~6]{Bulling11mu-ijcai} that \afAMCi does not cover the expressive power of \ATLir. The proof uses formulae of type $\coop{a}\Sometm\prop{p}$, but it is easy to construct an analogous argument for $\coop{a}\Always\prop{p}$.
In consequence, long-term strategic modalities of \ATLir do not have alternation-free fixpoint characterizations in terms of the next-step strategic modalities $\diam{A}$.
\extended{\wj{Can we extend the result and prove the same for whole \AMCi? I believe it should be true.}}
A similar result was proved for \ATLiR\extended{, i.e., the case of imperfect information and perfect recall strategies,} in~\cite[Theorem~11]{Dima14mucalc}.

\section{Lower Bounds for Abilities}\label{secUnder}


The complexity of \AEMC model checking seems more attractive than that of \ATLir. Unfortunately, the expressivity results cited in Section~\ref{sec:aemc} imply that there is no simple fixpoint translation which captures \emph{exactly} the meaning of \ATLir operators. It might be possible, however, to come up with a translation $\trlow$ that provides a \emph{lower bound} of the actual strategic abilities, i.e., such that $M,q \satisf \trlow(\coop[\ir]{A}\gamma)$ implies $M,q \satisf \coop[\ir]{A}\gamma$. In other words, a translation which can only reduce, but never enhance the abilities of the coalition.

We begin by investigating the ``naive'' fixpoint translation that mimics the one for \ATLIr, and show that it works in some cases, but not in general.
Then, we propose how to alter the semantics of the nexttime modality so that a general lower bound can be obtained.
We focus first on reachability goals, expressed by formulae $\coop[\ir]{A}\Sometm\phi$, and then \extended{extend the approach}\short{move on} to the other modalities.

%

\subsection{Trying It Simple for Reachability Goals}

We assume from now on that $\phi$ is a formula of \ATLir, $\model$ is a \ICGS, and $q$ is a state in $M$ (unless explicitly stated otherwise).
We start with the simplest translation, analogous to that of~\cite{Alur02ATL}:
$\trlow_1(\coop[\ir]{A}\Sometm\phi) = \mu Z.(\phi \lor \diam{A} Z)$.
Unfortunately, this translation provides neither a lower nor an upper bound.
For the former, use model $\model_0$ in Figure~\ref{fig:tr1}, and observe that $\model_0,q_0 \models \mu Z.(\prop{p} \lor \diam{1} Z)$ but $\model_0,q_0 \not\models \coop[\ir]{1}\Sometm\prop{p}$.
For the latter, take model $M$ in~\cite[Figure~1]{Bulling11mu-ijcai}, and observe that $M,q_0 \models \coop[\ir]{1}\Sometm\prop{p}$ but $M,q_0 \not\models \mu Z.(\prop{p} \lor \diam{1} Z)$.
\begin{proposition}\label{prop:nonbounds-simple}
$\model,q \satisf \mu Z.(\phi \lor \diam{A} Z)$ does not imply $\model,q \satisf \coop[\ir]{A}\Sometm\phi$.
The converse implication does not hold either.
\end{proposition}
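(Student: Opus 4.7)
The plan is to prove both non-implications by constructing explicit counterexample \ICGS{}s and verifying their properties. Since the statement claims two separate failures of implication, I need two (possibly different) models: one witnessing $\model,q \satisf \mu Z.(\phi \lor \diam{A} Z)$ together with $\model,q \not\satisf \coop[\ir]{A}\Sometm\phi$, and one witnessing the reverse situation.

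For the failure of the ``lower bound'' direction, I would design a small model $\model_0$ with a single agent $a$ and two states $q_0, q_0'$ that are indistinguishable to $a$, where from each of $q_0, q_0'$ agent $a$ has \emph{some} single action that leads to a $\prop{p}$-state, but the two required actions \emph{disagree}. Concretely, from $q_0$ the action $\alpha$ leads to a $\prop{p}$-state while $\beta$ leads to a non-$\prop{p}$ sink, and from $q_0'$ this is swapped. Then for the fixpoint side, $\prop{p}$-states enter $Z$ at the first iteration, and both $q_0$ and $q_0'$ satisfy $\diam{a}Z$ at the second iteration (using different actions at each state) because the semantics of $\diam{a}$ permits the witnessing strategy to be chosen differently in each indistinguishability class of the current state --- hence $\mu Z.(\prop{p} \lor \diam{a} Z)$ holds. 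However, no uniform \ir-strategy forces $\Sometm \prop{p}$ from $q_0$: any choice of $s_a(q_0)=s_a(q_0')$ fails in one of the two indistinguishable states, so by the subjective reading of $\coop[\ir]{a}$ the formula is false. I would then verify both claims directly from the definitions in Section~2.

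For the failure of the ``upper bound'' direction, I would reuse the model from~\cite[Figure~1]{Bulling11mu-ijcai}, which was designed to separate \ATLir from \afAMCi in precisely the same spirit. There, a uniform single strategy $s_a$ forces a path that eventually reaches a $\prop{p}$-state from $q_0$, so $\coop[\ir]{a}\Sometm\prop{p}$ holds. Yet the fixpoint $\mu Z.(\prop{p} \lor \diam{a}Z)$ collapses, because at no intermediate non-$\prop{p}$ state can a \emph{local} one-step imperfect-information strategy guarantee entering an already-accumulated set $Z$: local witnesses for $\diam{a}$ at indistinguishable states must coincide, which prevents the fixpoint from growing past the $\prop{p}$-states. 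I would verify this by explicitly running the Knaster--Tarski iteration, showing it stabilises at $\V(\prop{p})$ and does not include $q_0$.

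The main obstacle is the second direction: exhibiting the uniform strategy that wins $\Sometm \prop{p}$ while simultaneously arguing that the fixpoint iteration is blocked. The subtlety is that a winning \ir-strategy may traverse several states before reaching $\prop{p}$, collecting information along the path, whereas $\diam{a}$ is inherently myopic and forces each one-step witness to agree on all currently-indistinguishable states. Once the model is chosen so that the locally available one-step actions never suffice but a longer global plan does, both claims follow. The first direction is comparatively routine: it only requires decoupling the single-step witnesses across an indistinguishability class, which is exactly what the semantics of $\diam{A}$ in Section~\ref{sec:aemc} allows but the subjective semantics of $\coop[\ir]{A}$ (see Remark~\ref{rem:subjective}) forbids.
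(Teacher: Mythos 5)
Your handling of the second non-implication is fine and coincides with the paper's: both take the model from Bulling and Jamroga's Figure~1, and the mechanism you identify (a uniform strategy reaches $\prop{p}$ in different numbers of steps from indistinguishable states, which the myopic one-step operator $\diam{a}$ cannot reproduce, so the iteration stalls at $\V(\prop{p})$) is the right one.

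The first non-implication, however, contains a genuine gap: your proposed model does not satisfy $\mu Z.(\prop{p} \lor \diam{a} Z)$ at $q_0$, so it is not a counterexample. You justify $q_0$ entering the second iterate by claiming that ``the semantics of $\diam{a}$ permits the witnessing strategy to be chosen differently in each indistinguishability class of the current state.'' That is true across \emph{different} classes (which is why witnesses from different fixpoint iterations need not cohere), but your two states satisfy $q_0 \sim_a q_0'$ and therefore lie in the \emph{same} class. By the paper's definitions $\diam{a}$ abbreviates $\coop[\ir]{a}\Next$: the witness must be a uniform ($\ir$-)strategy, hence $s_a(q_0)=s_a(q_0')$, and its outcome set $\outir(q_0,s_a)$ already contains the paths starting from $q_0'$. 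Since your construction forces the two states to need \emph{different} actions to reach a $\prop{p}$-state in one step, no uniform witness exists, the iteration stabilises at $\V(\prop{p})$, and $q_0$ never enters the fixpoint; both formulae are simply false at $q_0$. The decoupling of one-step witnesses within a single $\sim_a$-class that your argument relies on is precisely what the semantics forbids.

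The actual source of unsoundness of $\trlow_1$ lies elsewhere, and the paper's counterexample ($\model_0$ in Figure~\ref{fig:tr1}) isolates it: the non-modal disjunct $\phi$ in $\mu Z.(\phi \lor \diam{A} Z)$ is evaluated \emph{locally}, so a state satisfying $\prop{p}$ enters the fixpoint at the very first iteration even when it has an indistinguishable sibling from which $\prop{p}$ is unreachable, while the subjective semantics of $\coop[\ir]{a}\Sometm\prop{p}$ fails at that state precisely because of the sibling. Two states $q_0 \sim_1 q_1$, each with only a self-loop and with $\prop{p}$ holding only at $q_0$, already do the job. This is also why the repair in $\trlow_2$ replaces $\phi$ by $E_A\phi$ rather than modifying $\diam{A}$.
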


\extended{
  \begin{remark}\label{rem:objective}
  IS THAT REALLY TRUE?
  Interestingly, $tr_1$ provides a correct lower bound for $\coop[\ir]{a}\Sometm\phi$ in the ``objective'' semantics of \atl[\ir] from~\cite{Jamroga03FAMAS,Bulling14comparing-jaamas}, as long as $\diam{a}$ in the translation is interpreted according to the standard ``subjective'' semantics.
  \end{remark}
}


\begin{figure}[t]\centering
\begin{tikzpicture}[transform shape, scale = 0.9]
\input{model-tr1-incorrect}
\end{tikzpicture}
\caption{\ICGS\ $\model_0$: a counterexample for $tr_1$}
\label{fig:tr1}
\end{figure}

\begin{figure}[t]\centering
\begin{tikzpicture}[transform shape, scale = 0.85]
\input{model-tr2-incorrect}
\end{tikzpicture}
\caption{$\model_1$: a counterexample for $tr_2$}
\label{rysanty}
\end{figure}

Consider now a slightly stronger fixpoint specification:
$\trlow_2(\coop[\ir]{A}\Sometm\phi) = \mu Z.(E_A\phi \lor \diam{A} Z)$.
This new translation works to an extent, as the following proposition shows.
\begin{proposition}\label{prop:reachsimple}
\mbox{}\extended{Let $A\subseteq\Agt$ and $q\in\States$. The following holds:}

\begin{enumerate}
\item\label{it:reachsimp1}
  $\model,q \satisf \mu Z.(E_\emptyset\phi \lor \diam{\emptyset} Z)$\ iff\ $\model,q \satisf \coop[\ir]{\emptyset}\Sometm\phi$;
\item\label{it:reachsimp2}
 If $|A| = 1$, then $\model,q \satisf \mu Z.(E_A\phi \lor \diam{A} Z)$ implies $\model,q \satisf \coop[\ir]{A}\Sometm\phi$, but the converse does not \extended{universally }hold;\footnote{
   Note that, for $A=\set{a}$, $E_A\phi$ is equivalent to $K_a\phi$.}
\item\label{it:reachsimp3}
  If $|A| > 1$, then $\model,q \satisf \mu Z.(E_A\phi \lor \diam{A} Z)$ does not imply $\model,q \satisf \coop[\ir]{A}\Sometm\phi$, and vice versa.
\end{enumerate}
\end{proposition}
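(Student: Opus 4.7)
My plan is to handle the three clauses in order. For Part~\ref{it:reachsimp1}, with $A=\emptyset$ the definition makes $\sim_\emptyset^E$ the identity, so $E_\emptyset\phi$ collapses to $\phi$, and $\diam{\emptyset}Z$ reduces to the universal-successor modality (the empty coalition has no strategic choice and must guard against every joint move of the opponents). The fixpoint therefore computes the familiar CTL set $\mathsf{AF}\phi$, which matches the semantics of $\coop[\ir]{\emptyset}\Sometm\phi$ at $q$, because $[q]_{\sim_\emptyset^E} = \{q\}$ and $\outir(q,s_\emptyset)$ coincides with the set of all paths from $q$. This is a standard Knaster--Tarski argument in a finite state space, so I expect no difficulty here.

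For the forward direction of Part~\ref{it:reachsimp2}, I use that $E_{\{a\}}\phi \equiv K_a\phi$, and observe that both $K_a\phi$ and $\diam{a}Z$ are class-uniform with respect to $\sim_a$; hence every approximant $Z_0 \subseteq Z_1 \subseteq \dots$ of $\mu Z.(K_a\phi \lor \diam{a}Z)$ is a union of $\sim_a$-classes. For each class $C$ that first appears at level $n$, I collect the witnessing uniform action (either any action if $K_a\phi$ already holds on $C$, or the uniform one witnessing $\diam{a}Z_{n-1}$ on $C$). Since $\sim_a$-classes are disjoint, these choices assemble into a single memoryless uniform strategy $s_a$, and a second induction on the entry level shows that every outcome path in $\outir(q,s_a)$ descends through the levels until it reaches a state where $K_a\phi$, hence $\phi$, holds. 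For the converse failure, I would exhibit a small \ICGS\ with $q_0\sim_a q_0'$ admitting a single forced joint action into states $r_1$ and $r_2$ that both satisfy $\prop{p}$ but are each $\sim_a$-equivalent to a non-$\prop{p}$ sink state. Then $\coop[\ir]{a}\Sometm\prop{p}$ holds at $q_0$, yet $K_a\prop{p}$ is globally false and the fixpoint is empty.

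For Part~\ref{it:reachsimp3} I split the two non-implications. The failure of ``$\coop[\ir]{A}\Sometm\phi \Rightarrow$ fixpoint'' is obtained by lifting the singleton counterexample above to a two-agent \ICGS\ (or equivalently by reusing the model of \cite[Figure~1]{Bulling11mu-ijcai} already cited before Proposition~\ref{prop:nonbounds-simple}). The failure of ``fixpoint $\Rightarrow \coop[\ir]{A}\Sometm\phi$'' is witnessed by the model $\model_1$ of Figure~\ref{rysanty}: I would walk through the fixpoint iteration to show that at each level a local uniform joint action witnesses $\diam{A}Z_n$, yet these per-step witnesses cannot be reconciled into a single memoryless joint strategy that is globally uniform for every agent. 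The main obstacle is exactly this stitching argument: because $\sim_A^E$ is a mere union rather than an equivalence and each agent's uniformity constraint is imposed against its own $\sim_a$, the fixpoint's existential per-step choices on different branches of $[q]_{\sim_A^E}$ may pull individual agents' actions in incompatible directions, while a genuine \ATLir strategy must satisfy all these constraints simultaneously. Verifying that $\model_1$ actually realises this obstruction is the technically delicate step, and I plan to do it by explicitly listing the approximants, naming the required joint action at each $\sim_a$-class, and pointing to the pair of classes where the two required choices collide.
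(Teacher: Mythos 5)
Your proposal is correct and follows essentially the same route as the paper: part (1) via collapse to perfect-information reachability, part (2) by exploiting that the Kleene approximants are unions of $\sim_a$-classes and stitching the per-class witnessing actions into one uniform strategy by entry level, and part (3) via the model $\model_1$ of Figure~\ref{rysanty} (whose failure hinges, as you say, on $\sim_A^E$ not being an equivalence relation) plus spurious agents. The only cosmetic difference is that for the converse of (2) you build your own small counterexample where the paper simply cites the model of \cite[Figure~1]{Bulling11mu-ijcai}; your construction is sound.
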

\begin{proof}
\textbf{\underline{Case~\ref{it:reachsimp1}}:}
follows from the fact that for the empty coalition the $\ir$--reachability is equivalent to the $IR$--reachabi\-lity,
which in turn has a fixpoint characterization\extended{ in $\AMC$~\cite{Alur02ATL}}.

\smallskip\noindent
\textbf{\underline{Case~\ref{it:reachsimp2}}:}
Let us assume that $A = \{a\}$ for some $a\in\Agt$.
We define the sequence $\{F_j\}_{j\in\Nat}$ of \afAMCi formulae s.t.
$F_0 = \knowmod_a{\phi}$ and $F_{j+1} = F_0 \lor \diam{a} F_j$, for all $j\ge 0$.
From Kleene fixed-point theorem we have
$\denotation{\mu Z.(\knowmod_a \phi \lor \diam{a} Z)} = \bigcup_{j=0}^\infty \denotation{F_j}$,
and $\{\denotation{F_j}\}_{j\in\Nat}$ is a non-decreasing monotone
sequence of subsets of $\States$.
%
%
Now, we prove that for each $j\in\Nat$ there exists a partial strategy $s_a^j$ s.t.
$dom(s_a^{j}) = \denotation{F_{j}}$, $\forall q\in dom(s_a^j) \;\forall\lambda\in \outir(q, s_a^j)\; \exists{k\le j} \;\lambda[k]\satisf{\phi}$,
              and $s_a^j\subseteq s_a^{j+1}$.
The proof is by induction on $j$. We constructively build $s_a^{j+1}$ from $s_a^j$ for each $j\in\Nat$.
The base case is trivial.
For the inductive step, firstly observe that for each $j\in\Nat$ if $q\in\denotation{F_j}$, then
$[q]_{\sim_a}\subseteq\denotation{F_j}$.
As $\sim_a$ is an equivalence relation,
for each $q\in \denotation{F_{j+1}}$ either $[q]_{\sim_a}\subseteq \denotation{F_j}$ or
$[q]_{\sim_a}\subseteq \denotation{F_{j+1}}\setminus \denotation{F_j}$.
In the first case we put $s_a^{j+1}(q) = s_a^j(q)$.
In the second case, we know that there exists a strategy $s_a^{q}$ s.t.
$\forall\lambda\in \outir(q, s_a^{q})\; \lambda[1]\in\denotation{F_j}$.
We thus put $s_a^{j+1}(q') = s_a^{q}(q')$ for all $q'\in[q]_{\sim_a}$, which concludes the inductive proof.

We finally define the partial strategy $s_a = \bigcup_{j\in\Nat}s_a^j$.
For each $q\in\States$ s.t.
$\model,q\satisf\mu Z.(\knowmod_a\phi \lor \diam{a} Z)$,
either $\model,q\satisf{\phi}$, or ${\phi}$ is reached
along each path consistent with any extension of $s_a$ to a full
strategy.

For the converse implication, take model $M$ in~\cite[Figure~1]{Bulling11mu-ijcai}, and observe that $M,q_0 \models \coop[\ir]{1}\Sometm\prop{p}$ but $M,q_0 \not\models \mu Z.(K_1\prop{p} \lor \diam{1} Z)$.

\smallskip\noindent
\textbf{\underline{Case~\ref{it:reachsimp3}}:}
Consider the \ICGS $\model_1$ presented in
Figure~\ref{rysanty}. We assume that $\prot_1(q) = \{a,b\}$ and $\prot_2(q) = \{x,y\}$, for $q\in\{q_1,q_2,q_3,q_4\}$. In the remaining states the protocols
allow only one action.
For clarity, we omit from the figure the transitions leaving the
states $q_1,q_2,q_3$, and $q_4$, leading to state $\sink$.
Assume now $\phi\equiv p$.
Note that
$\model,q_0\satisf\mu Z.(\everyknowmod_{\{1,2\}}{\phi} \lor \diam{\{1,2\}} Z)$
and
$\model,q_0\not\satisf\coop[\ir]{1,2}\Sometm \phi$.
For larger coalitions $A$, we extend the model with a sufficient number of spurious (idle) agents.

For the other direction, use the counterexample from Case~\ref{it:reachsimp2}, extended with appropriately many spurious agents.
\extended{This concludes the proof of the case and the proposition.}
\end{proof}

\begin{figure}[t]\centering
\begin{tabular}{@{}ccc}
(A) & $\quad$ & (B) \\ \\
\begin{tikzpicture}[transform shape, scale = 0.9]
  \input{model-next-fails.tex}
\end{tikzpicture}
 & &
\begin{tikzpicture}[transform shape, scale = 0.9]
  \input{model-steadfast-fails.tex}
\end{tikzpicture}
\end{tabular}
\caption{Lower bounds are not tight: (A) $\model_2$;\quad (B) $\model_3$}
\label{fig:reaspic}
\end{figure}

As Propositions~\ref{prop:nonbounds-simple} and~\ref{prop:reachsimple} show, translation $\trlow_2$ provides lower bounds for \ATLir verification only in a limited number of instances.
Also, the bound is rather loose, as the following example demonstrates.
\begin{example}\label{ex:reaspic}
Consider the single-agent \ICGS $\model_2$ presented in Figure~\ref{fig:reaspic}A.
The sole available strategy, in which agent $1$ selects always action $a$, enforces eventually reaching $\prop{p}$, i.e.,
$\model_2,q_0 \satisf \coop[\ir]{1}\Sometm\prop{p}$.
On the other hand, $\model_2,q_0 \not\models \mu Z.(K_1\prop{p} \lor \diam{1} Z)$.
This is because the next-step operator in \ATLir requires reaching $p$ simultaneously from \emph{all} the states indistinguishable
from $q_0$, whereas $p$ is reached from $q_0,q_1$ in one and two steps, respectively.
\end{example}

\subsection{Steadfast Next Step Operator}\label{sec:steadfast}


\extended{
  ???
  As we show, the new operator retains the basic properties of the next-step operator of \ATLir
  and the fixed-point reachability
  defined using $\diamStead{\cdot}$ 
  implies $\ir$--reachability.

  \paragraph{Persistent Next-Step Operator}
  Before we present the goal construct, we introduce
  an intermediate next-step operator
  $\diamPers{\cdot}$ whose aim is to alleviate the following limitation of \afAMCi.

  Let $s_A\in\strats_A$ be a strategy for $A\subseteq\Agt$ and
  $Q\subseteq\States$. From
  $\model = \langle\Agt, \States, \Props, \V, \Actions, \prot, \outf, \{\sim_a\mid a\in\Agt\} \rangle$
  we build the restricted model
  $\model_{s_A}^Q = \langle\Agt, \States, \Props', \V', \Actions, \prot', \outf', \{\sim_a\mid a\in\Agt\} \rangle$
  as follows.
  Firstly, in each state $q\in\States$ and agent $a\in A$ we limit the set of actions allowed
  to $a$ to the one selected by $s_a$, i.e., $\prot_a(q)' = \{s_a(q)\}$.
  Secondly, we introduce a new, fresh proposition $\qprop$ with labeling $\V'(\qprop) = Q$.
  All the remaining components of $\model_{s_A}^Q$ are inherited from $\model$.

  Let $q\in\States$, $\xval\in\xvarsvals$, and $\phi\in\afAMCi$.
}

To obtain a tighter lower bound, and one that works universally, we introduce a new modality. $\diamStead{A}$ can be seen as a semantic variant of the next-step ability operator $\diam{A}$ where:\ (i) agents in $A$ look for a short-term strategy that succeeds from the ``common knowledge'' neighborhood of the initial state (rather than in the ``everybody knows'' neighborhood), and\ (ii) they are allowed to ``steadfastly'' pursue their goal in a variable number of steps within the indistinguishability class.
In this section, we propose the semantics of $\diamStead{A}$ and show how to revise the lower bound. Some additional insights are provided in Section~\ref{sec:bounds-disc}.

We begin by defining the auxiliary function $\Reach$ so that $q\in\Reach_M(s_A,Q,\phi)$ collects all $q\in Q$ such that all the paths executing $s_A$
from $q$ eventually reach $\phi$ without leaving $Q$, except possibly for the last step:

\vspace{1mm}
\noindent
$\hspace{-0.1cm}\Reach_M(s_A,Q,\phi) =
  \{ q\in Q \mid \forall\lambda\in out(q,s_A) \\
  \mbox{}\qquad\qquad \exists i\ .\ M,\lambda[i] \satisf \phi\text{ and } \forall 0\le j<i\ .\ \lambda[j]\in Q\}$.

\smallskip\noindent
\extended{
  Our first attempt at the new operator is as follows:
  %
  \begin{itemize}
  \item $\model,q\satisf\diamPers{A}\phi$ iff there exists $s_A\in\strats_A^\ir$ such that $\Reach_M(s_A,[q]_{\sim_A^E},\phi) = [q]_{\sim_A^E}$.
  \end{itemize}
  Note that
  $\model,q\satisf\diamPers{A}\phi$ holds iff there exists a strategy whose every outcome path starting in $[q]_{\sim_E^A}$ eventually reaches $\phi$ without leaving $[q]_{\sim_E^A}$.

  \begin{proposition}
  If $\model,q \models \diamPers{A}\phi$, then $\model,q \satisf \coop[\ir]{A}\Sometm\phi$.
  \end{proposition}

  %
  Thus, $\diamPers{A}\phi$ actually provides a lower bound for $\coop[\ir]{A}\Sometm\phi$,
  though the bound is very weak in the sense that $\diamPers{A}\phi$ will seldom hold.
  Unfortunately, if we use it to replace $\diam{A}$ in the ``natural'' fixpoint unfolding of $\coop[\ir]{A}\Sometm\phi$
  from $\trlow_2$, then it provides neither lower nor upper bound\extended{ for verification}:
  \begin{proposition}
  $\model,q \satisf \mu Z.(E_A\phi \lor \diamPers{A} Z)$ does not imply $\model,q \satisf \coop[\ir]{A}\Sometm\phi$, and vice versa.
  \end{proposition}
  \begin{proof}
  For the \ICGS in Figure~\ref{rysanty} we have $\model,q\satisf \mu Z.(E_A \phi \lor \diamPers{A} Z)$
  and $\model,q\not\satisf \coop[\ir]{A}\Sometm \phi$.
  Recall that the method of unification of strategies for the case of single agent,
  presented in the proof of Proposition~\ref{prop:reachsimple}, was based on the fact that the relation
  of everybody knows $\sim_A$ is an equivalence relation if $|A| = 1$.
  The other direction follows analogously to cases~\ref{it:reachsimp2} and~\ref{it:reachsimp3} of Proposition~\ref{prop:reachsimple}.
  \end{proof}

  However, a small modification of the semantics suffices to obtain what we need.
} 
The \emph{steadfast next-step operator} $\diamStead{A}$ is defined\extended{ by replacing the ``everybody knows'' neighborhood of $q$ with its common knowledge neighborhood,} as follows:
\begin{itemize}
\item $\model,q\satisf\diamStead{A}\phi$ iff there exists $s_A\in\strats_A^\ir$ such that $\Reach_M(s_A,[q]_{\sim_A^C},\phi) = [q]_{\sim_A^C}$.
\end{itemize}
%
%
Now we can propose our ultimate attempt at the lower bound for reachability goals:
$\trlow_3(\coop[\ir]{A}\Sometm\phi) = \mu Z . (E_A\phi \lor \diamStead{A} Z)$, with the following result.

\begin{proposition}\label{prop:reachthm}
If $\model,q\models \mu Z . (E_A\phi \lor \diamStead{A} Z)$, then $\model,q\satisf \coop[\ir]{A}\Sometm\phi$.
The converse does not universally hold.
\end{proposition}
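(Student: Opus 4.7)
For the forward direction, the plan is to apply Kleene's theorem to write $\denotation{\mu Z.(E_A\phi\lor\diamStead{A}Z)}=\bigcup_{j\in\Nat}\denotation{F_j}$, where $F_0 = E_A\phi$ and $F_{j+1}=E_A\phi\lor\diamStead{A}F_j$, and then construct a witness $\ir$-strategy class by class. The essential observation is that $\denotation{\diamStead{A}X}$ is $\sim_A^C$-closed for every $X\subseteq\States$, because the defining condition refers only to the common-knowledge class of $q$, which is shared with every $q'$ such that $q'\sim_A^C q$. Consequently, whenever $q\in\denotation{F_j}\setminus\denotation{F_0}$, the whole class $[q]_{\sim_A^C}$ lies in $\denotation{\diamStead{A}F_{j-1}}\subseteq\denotation{F_j}$, and a single $\ir$-witness strategy on that class drives every outcome path from every class member to $\denotation{F_{j-1}}$ while staying inside the class in the meantime.

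I would then glue these per-class witnesses into a global $\ir$-strategy $s_A^\star$. For each $\sim_A^C$-class $C$ that contains some state of positive rank $\mathrm{rank}(q)=\min\{k\mid q\in\denotation{F_k}\}$, I would fix a representative $q^*\in C$ of minimum positive rank $j^*$ and set $s_A^\star$ on $C$ equal to the witness strategy for $\diamStead{A}F_{j^*-1}$ at $q^*$; elsewhere any uniform $\ir$ extension suffices. This is well-defined and $\sim_a$-uniform because $\sim_a\subseteq\,\sim_A^C$, so any $\sim_a$-related pair lies inside one class on which the chosen witness is already $\ir$. Correctness then follows by induction on $\mathrm{rank}(q)$: when $\mathrm{rank}(q)=0$ the clause $E_A\phi$ forces $\phi$ at step $0$ on every $\sim_A^E$-neighbour of $q$; when $\mathrm{rank}(q)\ge 1$ the $\Reach$ guarantee of $\diamStead{A}$ drives every outcome path from $[q]_{\sim_A^C}$ to a state of strictly smaller rank, where the inductive hypothesis applies. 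The main technical nuisance is the handover between classes as paths exit them, which is resolved by the global, memoryless definition of $s_A^\star$ and well-founded descent on ranks.

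For the converse, the model $\model_3$ of Figure~\ref{fig:reaspic}(B) suffices. The uniform strategy $s_1(q_0)=s_1(q_1)=a$ witnesses $\model_3,q_0\satisf\coop[\ir]{1}\Sometm\prop{p}$, because $q_2$ satisfies $\prop{p}$ and $q_1$ already does. However, the Kleene iteration for $\mu Z.(K_1\prop{p}\lor\diamStead{1}Z)$ stabilises at $\{q_2\}$: for $q_0$ to join the fixed point, the common-knowledge class $\{q_0,q_1\}$ would have to reach $\{q_2\}$ collectively, but the sole action $a$ at $q_1$ loops at $q_1$ and never exits to $q_2$. Hence $q_0$ is excluded from the fixed point, refuting the converse.
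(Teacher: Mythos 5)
Your argument is correct and follows essentially the same route as the paper's proof: Kleene unfolding of the fixpoint, the observation that $\denotation{\diamStead{A}\,\cdot\,}$ is $\sim_A^C$-closed, a global uniform strategy obtained by gluing per-class witnesses taken at the first Kleene level capturing each class, and well-founded descent to $\denotation{E_A\phi}$; your rank-based phrasing is just a repackaging of the paper's nested partial strategies $s_A^j$. The counterexample for the converse ($\model_3$, Figure~\ref{fig:reaspic}B) and its analysis also coincide with the paper's.
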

\begin{proof}
The proof is similar to the proof of Proposition~\ref{prop:reachsimple}.
As previously, we define a sequence $\{F_j\}_{j\in\Nat}$ of \afAMCi formulae s.t. $F_0 = \everyknowmod_A{\phi}$
and $F_{j+1} = F_0 \lor \diamStead{A} F_j$, for all $j\ge 0$.
We also use a sequence $\{H_j\}_{j\in\Nat}$ with $H_j = \diamStead{A} F_j$.
From Kleene fixed-point theorem we have
$\denotation{\mu Z . (E_A\phi \lor \diamStead{A} Z)}
\linebreak = \bigcup_{j=0}^\infty \denotation{F_j}
= \denotation{F_0} \cup \bigcup_{j=0}^\infty \denotation{H_j}$.
Observe that, as $\sim_C^A$ is an equivalence relation, we have for each $q\in\States$
and $j\in\Nat$ that
if $[q]_{\sim_C^A}\cap \denotation{H_{j}}\ne \emptyset$, then
$[q]_{\sim_C^A}\subseteq \denotation{H_{j}}$.

We prove that for each $j\in\Nat$ there exists a partial strategy $s_A^j$ s.t.
$dom(s_A^{j}) = \denotation{H_{j}}$,
$\forall q\in dom(s_A^j) \;\forall\lambda\in \outir(q, s_A^j)\; \exists k\in\Nat\;\lambda[k]\satisf\everyknowmod_A{\phi}$,
and $s_A^j\subseteq s_A^{j+1}$.
The proof is by induction on $j$. In the base case of $H_0 = \diamStead{A} \everyknowmod_A{\phi}$
observe that if $q\in \denotation{H_0}$ then there exists
a partial strategy $s_A^{0,q}$ with $dom(s_A^{0,q}) = [q]_{\sim_C^A}$ s.t. every
$\lambda\in \outir(q, s_A^{0,q})$ stays in $[q]_{\sim_C^A}$ until it reaches a state where $\everyknowmod_A{\phi}$ holds.
We can now
define $s_A^{0} = \bigcup_{[q]_{\sim_C^A} \in \States/\sim_C^A} s_A^{0,q}$
which is uniform, and reaches $E_A\phi$ on all execution paths.
%
For the inductive step, we divide the construction of $s_A^{j+1}$ in two cases.
Firstly, if $q\in\denotation{H_j}$, then we put $s_A^{j+1}(q) = s_A^{j}(q)$.
Secondly, let $q\in\denotation{H_{j+1}}\setminus\denotation{H_j}$.
In this case there exists a partial strategy $s_A^{j+1,q}$ with
$dom(s_A^{j+1,q}) = [q]_{\sim_C^A}$ s.t. each outcome $\lambda\in \outir(q, s_A^{j+1,q})$
stays in $[q]_{\sim_C^A}$ until it reaches a state $q'$ s.t. either $q'\satisf\everyknowmod_A{\phi}$
or $q'\in \denotation{H_j}$.
In the latter, from the inductive assumption we know that following $s_A^{j+1}$
always leads to reaching $\everyknowmod_A{\phi}$
without leaving $\denotation{H_j}$.
We thus take
$s_A^{j+1} = \bigcup_{[q]_{\sim_C^A} \in \States/\sim_C^A} s_A^{j+1,q}$
which, again, is uniform, and reaches $E_A\phi$ on all execution paths. This concludes the inductive part of the proof.

Finally, we build a partial strategy $s_A = \bigcup_{j\in\Nat} s_A^j$,
whose any extension 
is s.t. for each $q\in\States$, if
$\model,q\satisf\mu Z . (E_A\phi \lor \diamStead{A} Z)$,
then a state in which $\everyknowmod_A{\phi}$ holds is eventually reached
along each outcome path $\lambda\in \outir(q, s_A')$.
This concludes the proof of the implication.

To see that the converse does not hold, consider model $\model_3$ in Figure~\ref{fig:reaspic}B. We have that $\model_3,q_0\satisf \coop[\ir]{1}\Sometm\prop{p}$, but $\model_3,q_0\not\models \mu Z . (K_1\prop{p} \lor \diamStead{1} Z)$.
\end{proof}

Thus, $\trlow_3$ indeed provides a universal lower bound for reachability goals expressed in \ATLir.

\subsection{Lower Bounds for ``Always'' and ``Until''}

%
%

So far, we have concentrated on\extended{ the intuitive case of} reachability goals\extended{, expressed with the strategic operator $\coop[\ir]{A}\Sometm$}. We now extend the main result to all the modalities of \ATLir:
\begin{theorem}\label{prop:lowerbounds}
\mbox{}

\begin{enumerate}
\item\label{it:approxalways} If $\model,q \satisf \nu Z.(\commonknowmod_A\phi \land \diamStead{A} Z)$,
  then $\model,q \satisf \coop[\ir]{A}\Always\phi$;
\item\label{it:approxuntil} If $\model,q \satisf \mu Z.\big(\everyknowmod_A\phi \lor(\commonknowmod_A\psi \land \diamStead{A} Z)\big)$, then $\model,q \satisf \coop[\ir]{A}\psi\Until\phi$.
\end{enumerate}
\end{theorem}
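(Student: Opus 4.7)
The plan is to extend the strategy-construction technique from Proposition~\ref{prop:reachthm} in two directions: for Part~\ref{it:approxalways} I dualize to a greatest fixpoint via Knaster--Tarski instead of Kleene iteration, while for Part~\ref{it:approxuntil} I re-run the least-fixpoint induction but with $\psi$ as an invariant along the way. In both cases the crucial observation is that the witnessing set is a union of $\sim_A^C$-equivalence classes, so class-local $\ir$-strategies can be spliced into one uniform $\ir$-strategy.

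For Part~\ref{it:approxalways}, put $Q = \denotation{\nu Z.(\commonknowmod_A\phi \land \diamStead{A} Z)}^\model$. By Knaster--Tarski $Q$ is the largest subset of $\States$ with $Q \subseteq \denotation{\commonknowmod_A\phi}^\model \cap \denotation{\diamStead{A} Q}^\model$, and both conditions are $\sim_A^C$-invariant, so $Q$ is a union of $\sim_A^C$-classes. On each class $C\subseteq Q$ fix a witness $\ir$-strategy $s_A^C$ for $\diamStead{A} Q$: every outcome from every state in $C$ stays inside $C$ until it first visits some state of $Q$. The glued strategy $s_A=\bigcup_C s_A^C$, arbitrarily extended outside $Q$, is uniform because distinct classes are disjoint and $\sim_a\subseteq\sim_A^C$ for $a\in A$. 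An easy induction on prefix length then shows that every outcome from $q\in Q$ stays in $Q$ forever: whenever it leaves one class of $Q$ it immediately lands in another class of $Q$, where the next local strategy takes over. Since $Q\subseteq\denotation{\commonknowmod_A\phi}^\model\subseteq\denotation{\phi}^\model$, $\phi$ holds at every position, giving $\model,q\satisf\coop[\ir]{A}\Always\phi$.

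For Part~\ref{it:approxuntil}, set $F_0=\everyknowmod_A\phi$ and $F_{j+1}=F_0\lor(\commonknowmod_A\psi\land\diamStead{A} F_j)$, so by Kleene the semantics of the $\mu$-expression equals $\bigcup_j\denotation{F_j}^\model$. Mirroring the proof of Proposition~\ref{prop:reachthm}, build a chain of partial $\ir$-strategies $s_A^0\subseteq s_A^1\subseteq\cdots$ with $dom(s_A^j)=\denotation{F_j}^\model$ such that every outcome from any $q\in dom(s_A^j)$ keeps $\psi$ true until reaching a state where $\phi$ holds. The base case is immediate since reflexivity of $\sim_a$ forces $\phi$ at $q$ itself. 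The inductive step partitions $\denotation{F_{j+1}}^\model\setminus\denotation{F_j}^\model$ into $\sim_A^C$-classes and, on each class, uses the $\diamStead{A} F_j$ witness (under which $\psi$ is preserved everywhere thanks to $\commonknowmod_A\psi$) to drive the play into $F_j$, from which $s_A^j$ takes over by the induction hypothesis. The union $s_A=\bigcup_j s_A^j$, extended to a full $\ir$-strategy, witnesses the required ability.

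The main obstacle lies in Part~\ref{it:approxalways}: the steadfast operator only enforces that the outcome stays inside the current $\sim_A^C$-class until some state of $Q$ is reached, but does \emph{not} require that reached state to still lie in the same class. One must therefore check that (i)~splicing class-local strategies preserves uniformity at these class-switching transitions and (ii)~the induction on prefix length never lets the play escape $Q$. Both points reduce to $Q$ being closed under $\sim_A^C$ and the inclusion $\sim_a\subseteq\sim_A^C$ for every $a\in A$; once this bookkeeping is settled, the rest is a straightforward dualization of Proposition~\ref{prop:reachthm}.
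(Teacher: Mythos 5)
Your proof is correct in substance, and Part~\ref{it:approxuntil} follows the paper's own route exactly (the paper dispatches the Until case as ``analogous to Proposition~\ref{prop:reachthm}'', and your addition of the $\commonknowmod_A\psi$-invariant holding throughout each $\sim_A^C$-class is the right way to flesh that out). Part~\ref{it:approxalways} is where you genuinely diverge: the paper computes the greatest fixpoint as the Kleene limit $\bigcap_j\denotation{G_j}$ of the decreasing approximants $G_0=\commonknowmod_A\phi$, $G_{j+1}=G_0\land\diamStead{A}G_j$, proves by induction on $j$ that from $\denotation{G_j}$ some strategy keeps $\phi$ true for $j$ steps, and then passes to the union of an increasing chain of partial strategies. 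You instead argue coinductively: take the fixpoint $Q$ itself via Knaster--Tarski, observe it is a union of $\sim_A^C$-classes, glue one class-local witness per class, and prove $Q$-invariance of every outcome by induction on the position along the path. Your version buys a single explicitly defined strategy and avoids the paper's somewhat delicate claim that the stage-$j$ strategies form a $\subseteq$-chain; the paper's version buys a uniform template that it reuses for the $\mu$-cases. Two points to watch. First, your invariance induction needs the index $i$ witnessing $\Reach_M(s_A,[q]_{\sim_A^C},Q)$ to be at least $1$: if $i=0$ were admitted then, since $[q]_{\sim_A^C}\subseteq Q$, \emph{any} strategy would witness $\diamStead{A}Q$ and the greatest fixpoint would degenerate to $\denotation{\commonknowmod_A\phi}$. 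The paper's own proof makes the same tacit assumption when it extracts ``$\lambda[1]\in\denotation{G_j}$'' from the operator, so this is a shared reading of the definition of $\Reach$ rather than a gap specific to your argument, but it is worth stating explicitly. Second, in Part~\ref{it:approxuntil} the domains of the partial strategies should be $\denotation{\diamStead{A}F_j}$ rather than $\denotation{F_j}$, since $\denotation{\everyknowmod_A\phi}$ need not be a union of $\sim_A^C$-classes and so cannot safely carry a spliced uniform partial strategy; on that part of $F_j$ no strategy is needed anyway, because $\phi$ already holds at position $0$ of every path from the epistemic neighbourhood.
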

\begin{proof}
\textbf{\underline{Case~\ref{it:approxalways}}:}
Let us define the sequence $\{G_j\}_{j\in\Nat}$ of formulae s.t.
$G_0 = \commonknowmod_A\phi$ and $G_{j+1} = G_0 \land \diamStead{A} G_j$,
for all $j\ge 0$.
From Kleene fixed-point theorem,
$\denotation{\nu Z.(\commonknowmod_A\phi \land \diamStead{A} Z)} = \bigcap_{j=0}^\infty \denotation{G_j}$.
It suffices to prove that for each $j\in\Nat$ there exists a strategy $s_A^j$ s.t.
$\forall q\in \denotation{G_j} \;\forall\lambda\in \outir(q, s_A^j)\; \forall{0\le k\le j}$ $\lambda[k]\satisf\phi$.
The proof is by induction on $j$, with the trivial base case.
Assume that the inductive assumption holds for some $j\in\Nat$.
From the definition of the steadfast next-step operator
we can define for each equivalence class $[q]_{\sim_C^A}\in \denotation{G_{j+1}}/\sim_C^A$
a partial strategy $s_{A}^{q, j+1}$ s.t.
$\forall q'\in [q]_{\sim_C^A}\;\forall\lambda\in \outir(q, s_{A}^{q, j+1}) \;\lambda[1]\in\denotation{G_j}$.
We now construct \\
\centerline{$s_{A}^{j+1} = \bigcup_{[q]_{\sim_C^A}\in \denotation{G_{j+1}}/\sim_C^A} s_{A}^{q, j+1}
\cup
{s_{A}^j}|_{\denotation{\commonknowmod_A\phi}\setminus\denotation{G_j}}$.}\\
Intuitively, $s_{A}^j$ enforces that a path leaving each $q\in\denotation{G_{j+1}}$ stays within $\denotation{\commonknowmod_A\phi}$ for at least $j$ steps.
Moreover, $s_{A}^j\subseteq s_{A}^{j+1}$ for all $j$.
Thus, $s_A = \bigcup_{j\in\Nat} s_A^{j}$ enforces that a path leaving each $q\in\bigcup_{j\in\Nat}\denotation{G_{j}}$ stays within $\denotation{\commonknowmod_A\phi}$ for infinitely many steps, which concludes the proof.
\extended{That's a stronger property than we need! Can we somehow weaken the construction?}
%
Note that the correctness of the construction relies the fact that $\sim_C^A$ is
an equivalence relation.

\textbf{\underline{Case~\ref{it:approxuntil}}:}
analogous to\extended{ that of} Proposition~\ref{prop:reachthm}.
\end{proof}

\extended{
  In fact, a closer inspection of the above proof shows that a stronger result can be obtained.
  Namely, if we define a new logic by extending \ATLir with the operator of
  common knowledge (recall~\cite{Bulling11mu-ijcai} that the everybody knows operator
  is a derived modality in \ATLir), then:

  \begin{itemize}
  \item $\model,q\satisf\coop[ira]{A} \Always\phi$
  implies
  $\model,q\satisf\coop[\ir]{A} \Always\commonknowmod_A\phi$,
  \item $\model,q\satisf\coop[ira]{A} \psi\Until\phi$
  implies
  $\model,q\satisf\coop[\ir]{A} \everyknowmod_A\psi\Until\commonknowmod_A\phi$.
  \end{itemize}
}

\section{Discussion \& Properties}\label{sec:bounds-disc}

Theorem~\ref{prop:lowerbounds} shows that $\trlow_3(\phi)$ provides a correct lower bound of the value of $\phi$ for all formulae of \ATLir. In this section, we discuss the tightness of the approximation from the theoretical point of view. An empirical evaluation will be presented in Section~\ref{sec:experiments}.

\subsection{Comparing $\trlow_2$ and $\trlow_3$ for Reachability Goals}\label{sec:bounds-compare}

Translation $\trlow_3$ updates $\trlow_2$ by replacing the standard next-step ability operator $\diam{A}$ with the ``steadfast next-step ability'' $\diamStead{A}$.
The difference between\extended{ the semantics of} $\diam{A}\phi$ and $\diamStead{A}\phi$ is twofold.
First, $\diam{A}\phi$ looks for a winning\extended{ short-term} strategy in the ``everybody knows'' neighborhood of a given state (i.e., $[q]_{\sim_A^E}$), whereas $\diamStead{A}\phi$ looks at the ``common knowledge'' neighborhood (i.e., $[q]_{\sim_A^C}$).
Secondly, $\diamStead{A}$ allows to ``zig-zag'' across $[q]_{\sim_A^C}$ until a state satisfying $\phi$ is found.

Actually, the first change would suffice to provide a universally {correct} lower bound for \ATLir.
The second update makes it \emph{more useful} in models where agents may not see the occurrence of some action, such as $\model_2$ of Figure~\ref{fig:reaspic}A.
To see this formally, we show that $\trlow_3$ provides a strictly tighter approximation than $\trlow_2$ on singleton coalitions:
\begin{proposition}\label{prop:tighter}
For $A=\set{a}$, if $\model,q\models \mu Z . (K_a\phi \lor \diam{a} Z)$, then $\model,q\models \mu Z . (K_a\phi \lor \diamStead{a} Z)$.
The converse does not universally hold.
\end{proposition}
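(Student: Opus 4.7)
The plan is to reduce the claim to a pointwise comparison of the two next-step operators and then lift that by the Kleene construction. Observe first that since $\sim_a$ is already an equivalence relation, we have $[q]_{\sim_{\{a\}}^C} = [q]_{\sim_a}$, so both operators scan the same neighborhood of $q$ for a singleton coalition.

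My first step would be to establish the pointwise inclusion $\denotation{\diam{a}Z}\subseteq \denotation{\diamStead{a}Z}$ for every $Z\subseteq \States$. Suppose $q\in \denotation{\diam{a}Z}$, witnessed by an imperfect-information strategy $s_a$: for every $q'\sim_a q$ and every $\lambda\in \out(q',s_a)$ we have $\lambda[1]\in Z$. I would then claim that the very same $s_a$ witnesses $q\in \denotation{\diamStead{a}Z}$. Indeed, for any $q'\in[q]_{\sim_a}$ and any $\lambda\in\out(q',s_a)$ we can pick $i=1$ in the definition of $\Reach_M$: the requirement $\lambda[i]\in Z$ is given, and the condition $\lambda[j]\in[q]_{\sim_a}$ for all $0\le j<1$ reduces to $\lambda[0]=q'\in[q]_{\sim_a}$, which is immediate. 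Hence $\Reach_M(s_a,[q]_{\sim_a},Z)=[q]_{\sim_a}$.

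Next, I would lift the inclusion to the fixpoints. Let $\Phi(Z)=\denotation{K_a\phi}\cup \denotation{\diam{a}Z}$ and $\Psi(Z)=\denotation{K_a\phi}\cup \denotation{\diamStead{a}Z}$; both are monotone on $\powerset{\States}$. A straightforward induction on $n$ shows $\Phi^n(\emptyset)\subseteq \Psi^n(\emptyset)$: the step combines monotonicity of $\diam{a}$ in its argument with the pointwise inclusion proved above, giving $\denotation{\diam{a}\Phi^n(\emptyset)}\subseteq \denotation{\diam{a}\Psi^n(\emptyset)}\subseteq \denotation{\diamStead{a}\Psi^n(\emptyset)}$. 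By Kleene's theorem we obtain the desired implication on least fixpoints.

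For strictness I would use $\model_2$ from Figure~\ref{fig:reaspic}A with $\phi\equiv\prop{p}$. The only state where $K_1\prop{p}$ holds is $q_2$, and no action takes both $q_0$ and $q_1$ into $\{q_2\}$ in a single step, so the Kleene iteration for $\mu Z.(K_1\prop{p}\lor \diam{1}Z)$ stabilizes at $\{q_2\}$ and excludes $q_0$. With the steadfast operator, however, the strategy that always plays $a$ witnesses $q_0\in \denotation{\diamStead{1}\{q_2\}}$: from $q_0$ the path reaches $q_2$ in one step, while from $q_1$ it zig-zags through $q_0$ and reaches $q_2$ in two steps, never leaving $[q_0]_{\sim_1}=\{q_0,q_1\}$ before the last step. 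Hence $q_0$ belongs to the fixpoint for $\trlow_3$ but not for $\trlow_2$.

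The only subtlety I anticipate is aligning the definition of $\Reach_M$ (which quantifies over $\out$) with the semantics of $\diam{a}$ (which uses $\out^{\ir}$), but for a singleton coalition the two coincide after distributing over $[q]_{\sim_a}$, so no additional work is needed.
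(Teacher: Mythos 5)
Your proposal is correct and follows essentially the same route as the paper: the key step is the pointwise observation that $\model,q\satisf\diam{a}\phi$ implies $\model,q\satisf\diamStead{a}\phi$ for singleton coalitions (where $[q]_{\sim_{\{a\}}^C}=[q]_{\sim_a}$, so both operators quantify over the same neighborhood and one may take $i=1$ in $\Reach$), lifted to the least fixpoints by monotonicity, with $\model_2$ of Figure~\ref{fig:reaspic}A witnessing strictness. Your write-up merely spells out the Kleene-iteration details that the paper leaves implicit.
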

\begin{proof}
It suffices to observe that $M,q\satisf \diam{a}\phi$ implies $M,q\satisf \diamStead{a}\phi$,
for any $\phi\in\afAMCi$. Note that this is true only for single-agent
coalitions.
For the converse, notice that in \ICGS $\model_2$ from
Figure~\ref{fig:reaspic}A we have $\model_2,q_0\satisf \mu Z . (K_1\prop{p} \lor \diamStead{1} Z)$ and
$\model_2,q_0\not\satisf \mu Z . (K_1\prop{p} \lor \diam{1} Z)$.
\end{proof}


On the other hand, if agent $a$ always sees whenever an action occurs, then $\trlow_2$ and $\trlow_3$ coincide for $a$'s abilities. Formally, let us call \ICGS $M$ \emph{lockstep for $a$} if, whenever there is a transition from $q$ to $q'$ in $M$, we have $q\not\sim_a q'$.
The following is straightforward.
\begin{proposition}\label{prop:lockstep}
If $M$ is lockstep for $a$, then $\model,q\models\diam{a}\phi$ iff $\model,q\models\diamStead{a}\phi$.
In consequence, $\model,q\models \trlow_2(\coop{a}\Sometm\phi)$ iff $\model,q\models\trlow_3(\coop{a}\Sometm\phi)$.
\end{proposition}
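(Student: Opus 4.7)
The forward implication $\diam{a}\phi \Rightarrow \diamStead{a}\phi$ is immediate from Proposition~\ref{prop:tighter} and requires no lockstep hypothesis. For the converse I would take a witness $s_a \in \strats_a^{\ir}$ of $\diamStead{a}\phi$ at $q$; for a singleton coalition $\sim_a^C$ collapses to $\sim_a$. For any $q' \in [q]_{\sim_a}$ the unique deterministic path $\lambda$ from $q'$ under $s_a$ must reach $\phi$ at some index $i$ with $\lambda[0], \ldots, \lambda[i-1] \in [q]_{\sim_a}$, by the defining condition of $\Reach$. The lockstep hypothesis forces $\lambda[1] \not\sim_a \lambda[0]$, hence $\lambda[1] \notin [q]_{\sim_a}$, so $i \in \{0, 1\}$. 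Index $i = 1$ directly yields $\lambda[1] \in \denotation{\phi}$, which is exactly what $\diam{a}\phi$ demands for the successor of $q'$.

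For the ``in consequence'' clause, the borderline case $i = 0$ (where $q' \satisf \phi$ but its successor is unconstrained) causes no discrepancy between the two fixpoints. I would argue by induction on the Kleene iterates $Z_n$ of the two least fixpoints, using the invariant that every $Z_n$ is closed under $\sim_a$, since $K_a\phi$, $\diam{a}$, and $\diamStead{a}$ always return $\sim_a$-closed sets. Granted $Z_n^{\trlow_2} = Z_n^{\trlow_3} =: Z_n$ and $q \in \denotation{\diamStead{a} Z_n}$, closure of $Z_n$ under $\sim_a$ forces a dichotomy: either $[q]_{\sim_a} \subseteq Z_n$, whence $q \in Z_n \subseteq Z_{n+1}^{\trlow_2}$ by monotonicity of the fixpoint operator; or $[q]_{\sim_a} \cap Z_n = \emptyset$, in which case no $q' \in [q]_{\sim_a}$ can use $i_{q'} = 0$, forcing $i_{q'} = 1$ uniformly and making $s_a$ an honest witness of $\diam{a} Z_n$ at $q$. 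Thus $Z_{n+1}^{\trlow_3} \subseteq Z_{n+1}^{\trlow_2}$, and the opposite inclusion is free from $\diam{a} \subseteq \diamStead{a}$.

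The hard part will be spotting the $\sim_a$-closure invariant on the fixpoint iterates: with it, any $q'$ in $[q]_{\sim_a}$ that already lies in $Z_n$ drags the whole equivalence class into $Z_n$ and hence into the next iterate, so the borderline $i = 0$ case is absorbed without damage; without it, or without lockstep limiting $i_{q'}$ to $\{0,1\}$, the interplay of the two modalities would no longer collapse so cleanly and the fixpoints could in principle diverge.
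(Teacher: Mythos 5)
The paper offers no proof here (it merely declares the proposition ``straightforward''), so there is nothing to compare against; your argument is the careful working-out that the paper skips, and the fixpoint part of it is correct. Your key observations are the right ones: for a singleton coalition $\sim_a^C=\sim_a$, lockstep pins the witnessing index in $\Reach$ to $i\in\{0,1\}$, and the iterates of both $\mu$-computations are $\sim_a$-closed, so the $i=0$ case forces the whole class $[q]_{\sim_a}$ into $Z_n\subseteq Z_{n+1}$ while the $i=1$ case reproduces an honest $\diam{a}Z_n$ witness. Together with the free inclusion $\denotation{\diam{a}\psi}\subseteq\denotation{\diamStead{a}\psi}$ from Proposition~\ref{prop:tighter}, this gives $Z_{n+1}^{\trlow_2}=Z_{n+1}^{\trlow_3}$ by induction and hence the equality of the two least fixpoints, which is the consequence that actually matters for Section~\ref{sec:bridge}.

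Two caveats. First, you quietly route around the fact that the \emph{first} sentence of the proposition, read literally against the paper's definition of $\Reach$ (which admits $i=0$), is false: if every state of $[q]_{\sim_a}$ satisfies $\phi$ but some lockstep successor does not, then $\diamStead{a}\phi$ holds at $q$ while $\diam{a}\phi$ fails. Your proof establishes the iff only on classes disjoint from $\denotation{\phi}$ and then shows the exception is absorbed by the fixpoint; it would be cleaner to say this explicitly rather than leave the converse of the first claim apparently finished after the $i=1$ case. Second, a path in $out(q',s_a)$ is not ``unique deterministic'' --- the other agents still choose actions --- but since you quantify over all such paths and never use uniqueness, this is only a wording slip.
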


\subsection{When is the Lower Bound Tight?}

An interesting question is: what is the subclass of \ICGS's for which $\trlow_3$ is tight, i.e., the answer given by the approximation is exact?
We address the question only partially here. In fact, we characterize a subclass of \ICGS's for which $\trlow_3$ is certainly \emph{not} tight, by the necessary condition below.

Let $\gamma\equiv\Always\psi$ or $\gamma\equiv\psi_1\Until\psi_2$ for some $\psi,\psi_1,\psi_2\in\ATLir$.
We say that strategy $s_A\in\strats_A^\ir$ is \emph{winning for $\gamma$ from $q$} if it obtains $\gamma$ for all paths in $out^{\ir}(q,s_A)$.
Moreover, for such $s_A$, let $RR(q,s_A,\gamma)$ be the\extended{ set of} \emph{relevant reachable states of $s_A$ in the context of $\gamma$}, defined as follows:\
$RR(q,s_A,\Always\psi)$ is the set of states that occur anywhere in $out^{\ir}(q,s_A)$;\
$RR(q,s_A,\psi_1\Until\psi_2)$ is the set of states that occur anywhere in $out^{\ir}(q,s_A)$ before the first occurrence of $\psi_2$.
\begin{proposition}\label{prop:tight}
Let $M$ be a \ICGS, $q\in\States_M$, and $\phi \equiv \coop[\ir]{A}\gamma$.
Furthermore, suppose that $\phi$ and $\trlow_3(\phi)$ are either both true or both false in $M,q$. Then:
\begin{enumerate2}
\item\label{it:nostrat} either no strategy $s_A\in\strats_A^\ir$ is winning for $\gamma$ from $q$,\ or
\item\label{it:recomputable} there is a strategy $s_A\in\strats_A^\ir$ which is winning for $\gamma$ from every $q'\in RR(q,s_A,\gamma)$.
\end{enumerate2}
\end{proposition}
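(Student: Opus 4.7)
The plan is a case analysis on the common truth value of $\phi$ and $\trlow_3(\phi)$ at $M,q$. If both are false, then in particular $\model,q\not\satisf\coop[\ir]{A}\gamma$, so no $s_A\in\strats_A^\ir$ is winning for $\gamma$ from $q$, which is exactly clause~(\ref{it:nostrat}). The substantive case is when both are true; here the goal is to extract from the fixpoint witness for $\trlow_3(\phi)$ a strategy satisfying clause~(\ref{it:recomputable}).

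For that case I would reopen the constructive proofs of Proposition~\ref{prop:reachthm} and Theorem~\ref{prop:lowerbounds}. Each of them builds an $\ir$-strategy $s_A$ layer by layer, where the $j$-th partial strategy $s_A^j$ is defined on the $j$-th iterate $D_j$ of the relevant fixpoint operator. Two structural facts are then essential. First, the full strategy $s_A$ is winning for $\gamma$ from \emph{every} state of the fixpoint denotation $D$ (with $D=\bigcup_j D_j$ in the $\mu$-cases and $D=\bigcap_j D_j$ in the $\nu$-case), not merely from $q$. Second, each $D_j$ is a union of $\sim_A^C$-equivalence classes, because the semantics of $\diamStead{A}$ is closed under $\sim_A^C$ (as already observed in the proof of Proposition~\ref{prop:reachthm}), and $\sim_A^C\supseteq\sim_A^E$.

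Given these two facts, the conclusion reduces to showing $RR(q,s_A,\gamma)\subseteq D$. For reachability, $out^{\ir}(q,s_A)$ consists of paths whose initial state is $\sim_A^E$-related to $q$; all such initial states lie in $D$ by the closure above, and the semantics of $\diamStead{A}$ guarantees that each path remains in $D$ until $\everyknowmod_A\phi$ is reached, so all ``pre-$\phi$'' states are in $D$. The $\Until$ case is analogous, with $\psi_2$ in place of $\phi$ and $\commonknowmod_A\psi_1$ justifying continued residence in the $\sim_A^C$-neighborhood. For $\Always$, $D$ is the greatest fixpoint of $X\mapsto\commonknowmod_A\phi\land\diamStead{A}X$, hence closed under every outcome of $s_A$, so every state occurring in $out^{\ir}(q,s_A)$ lies in $D$.

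The main obstacle will be the careful bookkeeping in the $\Until$ and $\Always$ cases, in particular verifying that the per-class partial strategies constructed in the proof of Theorem~\ref{prop:lowerbounds} glue into a uniform total strategy that wins not merely from the chosen $q$ but from every $q'$ reachable along outcomes starting in $[q]_{\sim_A^E}$. This uniformity, together with the stability of $D$ under $\sim_A^C$, is precisely what turns the fixpoint witness into a ``recomputable'' winning strategy in the sense of clause~(\ref{it:recomputable}).
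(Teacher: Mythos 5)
The paper itself states this proposition without proof (the source carries only an internal ``proof?'' author note), so there is no official argument to compare against; I assess your proposal on its own terms. Your overall plan is the right one and, I believe, the intended one: the ``both false'' case gives clause~(\ref{it:nostrat}) immediately from the semantics of $\coop[\ir]{A}\gamma$, and the ``both true'' case should be discharged by taking the strategy $s_A$ assembled in the proofs of Proposition~\ref{prop:reachthm} and Theorem~\ref{prop:lowerbounds}, observing that it is winning from \emph{every} state of the fixpoint denotation $D$ (this is literally what those proofs establish, in the subjective $\outir$ sense), that $D$ is a union of $\sim_A^C$-classes, and that $RR(q,s_A,\gamma)\subseteq D$. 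For the $\Until$/reachability cases your argument for $RR(q,s_A,\gamma)\subseteq D$ is essentially complete: each steadfast segment of an outcome path stays inside a $\sim_A^C$-class contained in some $\denotation{H_j}\subseteq D$ until it hands over either to $\denotation{H_{j-1}}$ or to a state satisfying $\everyknowmod_A\psi_2$, and $RR$ collects only states strictly before the first occurrence of $\psi_2$, which is no later than the first occurrence of $\everyknowmod_A\psi_2$.

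The one genuine gap is in the $\Always$ case, where you write that $D$ is the greatest fixpoint of $X\mapsto\commonknowmod_A\psi\land\diamStead{A}X$ and is ``hence closed under every outcome of $s_A$.'' That ``hence'' does not follow: fixpoint membership of $q'$ only asserts the \emph{existence} of some short-term witness strategy re-entering $D$ from $[q']_{\sim_A^C}$; it says nothing about where the particular strategy $s_A$ sends the play. Worse, under the literal definition of $\Reach$ the witness position $i$ may be $0$, in which case (since $[q']_{\sim_A^C}\subseteq D$ already) the condition $\diamStead{A}D$ is vacuously satisfied and constrains no transition at all. To close the gap you must argue from the construction rather than from the fixpoint property: each $\sim_A^C$-class of $D$ is equipped in Theorem~\ref{prop:lowerbounds} with a concrete partial strategy whose outcomes remain in that class until they reach, after at least one step, a state of $\denotation{G_j}$; since the Kleene iteration stabilizes in a finite model, $D=\denotation{G_{j^*}}$ for some finite $j^*$, so the handover state is again in $D$, and an induction along the steadfast segments yields closure of $D$ under the outcomes of $s_A$, hence $RR(q,s_A,\Always\psi)\subseteq D$. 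With that repair (a caveat inherited from the proof of Theorem~\ref{prop:lowerbounds} itself), your proof goes through.
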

\wj{proof?}

Conversely, the approximation is \emph{not} tight if there are winning strategies, but each of them reaches a intermediate state $q'$ from which no winning substrategy can be computed. This can only happen if some states in\extended{ the epistemic neighborhood} $[q']_{\sim_A^E}$ are not reachable by $s_A$. In consequence, the agents in $A$ forget relevant information that comes alone from the fact that they are executing $s_A$.
%
We will use Proposition~\ref{prop:tight} in Section~\ref{sec:experiments} to show that the few benchmarks existing in the literature are not amenable to our approximations.
\extended{The complete characterization of applicability for our approximation schemeis left for future work.}

\section{Approximation Semantics for \ATLir}

Note that $\model,q \satisf \coop[\ir]{A}\gamma$ always implies $\model,q \satisf E_A\coop[\Ir]{A}\gamma$\extended{ (the proof is straightforward from the semantics)}.
Based on this, and the lower bounds established in Theorem~\ref{prop:lowerbounds}, we propose the \emph{lower approximation} $\trlow$ and the \emph{upper approximation} $\trup$ for \ATLir as follows:

\smallskip
{
\hspace{-0.4cm}
\begin{tabular}{l}
$\trlow(p) = p$,\quad
$\trlow(\neg\phi) = \neg\trup(\phi)$,\quad
$\trlow(\phi\land\psi) = \trlow(\phi) \land \trlow(\psi)$,
\\
$\trlow(\diam{A}\phi) = \diam{A}\trlow(\phi)$,
\\
$\trlow(\coop[\ir]{A}\Always\phi) = \nu Z.(\commonknowmod_A{\trlow}(\phi) \land \diamStead{A} Z)$,
\\
$\trlow(\coop[\ir]{A}\psi\Until\phi) = \mu Z.\big(\everyknowmod_A\trlow(\phi) \lor(\commonknowmod_A\trlow(\psi) \land \diamStead{A} Z)\big)$.
\\
\end{tabular}
}

\medskip
{
\hspace{-0.4cm}
\begin{tabular}{l}
$\trup(p) = p$,\qquad
$\trup(\neg\phi) = \neg\trlow(\phi)$,
\\
$\trup(\phi\land\psi) = \trup(\phi) \land \trup({\psi})$,
\\
$\trup(\diam{A}\phi) = E_A\coop[\Ir]{A}\Next\trup(\phi)$,
\\
$\trup(\coop[\ir]{A}\Always\phi) = E_A\coop[\Ir]{A}\Always\trup(\phi)$,
\\
$\trup(\coop[\ir]{A}\psi\Until\phi) = {E_A\coop[\Ir]{A}}\trup(\psi)\Until\trup(\phi)$.
\\
\end{tabular}
}

\extended{
  \begin{tabular}{l@{\quad}l}
  $\trlow(p) = p$
  &
  $\trup(p) = p$
  \\
  $\trlow(\neg\phi) = \neg\trup(\phi)$
  &
  $\trup(\neg\phi) = \neg\trlow(\phi)$
  \\
  $\trlow(\phi\land\psi) = \trlow(\phi) \land \trlow(\phi)$
  &
  $\trup(\phi\land\psi) = \trup(\phi) \land \trup(\phi)$
  \\
  $\trlow(\diam{A}\phi) = \diam{A}\trlow(\phi)$
  &
  $\trup(\diam{A}\phi) = E_A\coop[\Ir]{A}\Next\trup(\phi)$
  \\
  $\trlow(\coop[\ir]{A}\Always\phi) = \nu Z.(\commonknowmod_A\phi \land \diamStead{A} Z)$
  &
  $\trup(\coop[\ir]{A}\Always\phi) = E_A\coop[\Ir]{A}\Always\trup(\phi)$
  \\
  $\trlow(\coop[\ir]{A}\psi\Until\phi) = \mu Z.\big(\everyknowmod_A\trlow(\phi) \lor(\commonknowmod_A\trlow(\psi) \land \diamStead{A} Z)\big)$
  &
  $\trup(\coop[\ir]{A}\psi\Until\phi) = E_A\coop[\Ir]{A}\coop[\ir]{A}\trup(\psi)\Until\trup(\phi)$.
  \\
  \end{tabular}
} 

\medskip
The following important results can be proved by straightforward induction on the structure of $\phi$.

\begin{theorem}\label{prop:approximation}
For any\extended{ \ICGS $M$, a state in it, and an} \ATLir formula $\phi$:\\
\centerline{$\model,q \models \trlow(\phi)\ \ \Rightarrow\ \ \model,q \models \phi\ \ \Rightarrow\ \ \model,q \models \trup(\phi)$.}
\end{theorem}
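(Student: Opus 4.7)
The proof is by simultaneous structural induction on $\phi$, establishing both implications together. Simultaneity is forced by the negation clauses: $\trlow(\neg\psi) = \neg\trup(\psi)$ and $\trup(\neg\psi) = \neg\trlow(\psi)$, so the lower bound for $\neg\psi$ requires the upper bound for $\psi$ (via contrapositive), and vice versa. The base case ($\phi = p$) and the Boolean cases ($\neg\psi$, $\psi_1 \land \psi_2$) are immediate once one unfolds the definitions and applies the induction hypothesis.

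For the lower bound $\trlow(\phi) \Rightarrow \phi$ at the strategic cases, the plan is to invoke Theorem~\ref{prop:lowerbounds} and then ``lift'' through the induction hypothesis. Concretely, for $\phi \equiv \coop[\ir]{A}\Always\psi$, suppose $\model,q \models \nu Z.(\commonknowmod_A \trlow(\psi) \land \diamStead{A}Z)$. Theorem~\ref{prop:lowerbounds}(\ref{it:approxalways}), applied with the subformula $\trlow(\psi)$ in place of $\phi$, yields $\model,q \models \coop[\ir]{A}\Always \trlow(\psi)$. The induction hypothesis gives $\denotation{\trlow(\psi)} \subseteq \denotation{\psi}$, and by monotonicity of the $\Always$-operator in its argument we conclude $\model,q \models \coop[\ir]{A}\Always\psi$. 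The $\Until$ case is handled analogously via Theorem~\ref{prop:lowerbounds}(\ref{it:approxuntil}). The next-step case $\diam{A}\psi$ does not need Theorem~\ref{prop:lowerbounds}: one simply reuses the witnessing uniform strategy and applies the induction hypothesis at the successor states.

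For the upper bound $\phi \Rightarrow \trup(\phi)$, the key observation is the generic inequality $\model,q \satisf \coop[\ir]{A}\gamma$ implies $\model,q \satisf E_A\coop[\Ir]{A}\gamma$, noted already in the text just before the definition of $\trup$: a uniform strategy $s_A$ witnessing $\coop[\ir]{A}\gamma$ at $q$ is in particular a perfect-information strategy, and because $s_A$ must succeed from every $q' \in [q]_{\sim_A^E}$ (subjective semantics, Remark~\ref{rem:subjective}), we get $\coop[\Ir]{A}\gamma$ at every such $q'$. Applying this to $\gamma \in \{\Next\psi, \Always\psi, \psi_1\Until\psi_2\}$ and then replacing the inner subformulas by their upper approximations via the induction hypothesis (which is sound by monotonicity of $\coop[\Ir]{A}$ in its path argument) gives the required conclusion in each strategic case.

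The main obstacle I anticipate is the monotonicity step used in both directions: one must verify that if $\denotation{\trlow(\psi)} \subseteq \denotation{\psi} \subseteq \denotation{\trup(\psi)}$ at every state, then the corresponding inclusions propagate through the path quantifiers of $\Always$, $\Until$, and $\Next$. This is standard but requires care when $\psi$ itself contains nested strategic modalities, since $\trlow$ and $\trup$ are defined in mutual recursion; the simultaneous induction scheme is precisely what makes this bookkeeping go through. All other steps are direct unfoldings of the definitions of $\trlow$, $\trup$, and the semantics of the relevant modalities.
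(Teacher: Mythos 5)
Your proposal is correct and follows essentially the same route as the paper, which proves the theorem by structural induction on $\phi$, using Theorem~\ref{prop:lowerbounds} for the lower bound and the observation that $\model,q \satisf \coop[\ir]{A}\gamma$ implies $\model,q \satisf E_A\coop[\Ir]{A}\gamma$ for the upper bound. You merely make explicit the details the paper leaves implicit—namely that the mutual recursion of $\trlow$ and $\trup$ through negation forces the two implications to be proved simultaneously, and that monotonicity of the modalities is what lets the induction hypothesis propagate.
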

\extended{
  \begin{proof}
  The proof follows via a straightforward induction on the structure
  of $\phi$, employing Theorem~\ref{prop:lowerbounds} and
  the observation that $q \satisf \coop[\ir]{A}\gamma$ implies $q \satisf E_A\coop[\Ir]{A}\gamma$
  for all $\gamma\in\{\Next\phi, \Always\phi, \psi\Until\phi\}$.
  \end{proof}
} 


\begin{theorem}\label{prop:approximation-complexity}
If $\phi$ includes only coalitions of size at most 1, then model checking $\trlow(\phi)$ and $\trup(\phi)$ can be done in time $O(|\model|\cdot|\phi|)$.
In the general case, the problem is between \NP and \Deltwo wrt $\max_{A\in\phi}(|\!\sim_A^C\!|)$ and $|\phi|$.
\end{theorem}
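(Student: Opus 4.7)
The proof decomposes along the two translations. For $\trup(\phi)$, a routine induction shows that the translation produces a formula in \ATLIr interspersed with epistemic modalities: each clause wraps recursively translated subformulae with $E_A\coop[\Ir]{A}$. By Proposition~\ref{prop:complexity-atl} and the fact that each $E_A$ is evaluated in $O(|M|)$ via a scan of $\sim_A^E$, a standard bottom-up labeling procedure verifies $\trup(\phi)$ in $O(|M|\cdot|\phi|)$ regardless of coalition sizes, which already gives the upper-approximation half of both claims.

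For $\trlow(\phi)$, the output is an alternation-free epistemic $\mu$-calculus formula whose only non-standard construct is the steadfast next-step operator $\diamStead{A}$. The key step is to show that $\denotation{\diamStead{A}\psi}^M$ can be computed globally from $\denotation{\psi}^M$ by solving, for each class $C$ of the partition $\States/{\sim_A^C}$, a restricted uniform-strategy reachability game on $C$: by definition of $\Reach$, truth of $\diamStead{A}\psi$ at $q$ depends only on $[q]_{\sim_A^C}$ together with the already-computed $\denotation{\psi}^M$. Slotting this single labeling pass into the alternation-free $\mu$-calculus algorithm of~\cite{Bulling11mu-ijcai} treats $\diamStead{A}$ as one basic step per outer fixpoint iteration.

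For single-agent coalitions, $\sim_a^C=\sim_a$ and uniformity inside a class $C$ reduces to choosing one action for the whole class, so the inner reachability game is solved by a classical least-fixpoint computation, aggregating to $O(|M|)$ per global labeling of $\diamStead{a}$. Combined with the alternation-free loop this yields the claimed $O(|M|\cdot|\phi|)$ bound. For larger coalitions, the class-level problem becomes the NP-complete task of selecting a joint uniform memoryless strategy that wins reachability on $C$, with the argument transferring from the known NP-hardness of the next-step \ATLir operator. Iterating polynomially many NP queries through the alternation-free fixpoint structure places the whole problem in \Deltwo with respect to $\max_{A\in\phi}|\!\sim_A^C\!|$ and $|\phi|$, while NP-hardness is inherited by encoding SAT into a one-step reachability choice inside a single common knowledge class.

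The main obstacle is the accounting that lets $\diamStead{A}$ behave as a single global labeling pass rather than an inner computation redone at every outer fixpoint round: this is what prevents the NP cost in the multi-agent case from being multiplied by the number of iterations, and it mirrors how $\diam{A}$ is handled in the proof of Proposition~\ref{prop:complexity-aemc}.
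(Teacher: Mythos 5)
The paper states this theorem without an explicit proof, so there is no official argument to match yours against line by line; what you propose is essentially the reconstruction the authors rely on implicitly, namely Proposition~\ref{prop:complexity-atl} for the \ATLIr{} ingredients plus a per-class treatment of $\diamStead{A}$ slotted into the alternation-free fixpoint algorithm behind Proposition~\ref{prop:complexity-aemc}. Your key observations are sound: the denotation of $\diamStead{A}\psi$ is determined class by class of $\sim_A^C$; for a singleton coalition each class admits only one action choice, so the class-level check is an until-style least fixpoint restricted to the class and the whole pass is linear; for larger coalitions each class-level check is an \NP{} query, and polynomially many such queries give the \Deltwo{} upper bound. Three small corrections are in order. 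First, $\trlow$ and $\trup$ are mutually recursive through negation ($\trup(\neg\phi)=\neg\trlow(\phi)$), so the claim that $\trup(\phi)$ is verifiable in $O(|\model|\cdot|\phi|)$ ``regardless of coalition sizes'' does not survive negated strategic subformulae, which drag the steadfast fixpoints back into $\trup$; the clean split into an \ATLIr{} half and a $\mu$-calculus half should be replaced by a single bottom-up labeling over the mixed translation, with the stated bounds then following for the combined problem. Second, the ``main obstacle'' you identify---amortizing $\diamStead{A}$ into a single labeling pass---matters only for the $O(|\model|\cdot|\phi|)$ bound in the singleton case; in the general case, naively re-solving the \NP{} subproblem at each of the polynomially many fixpoint iterations still yields polynomially many oracle calls and hence \Deltwo, so no special accounting is needed there. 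Third, \NP-hardness of the general case needs no fresh SAT encoding into $\diamStead{A}$: since $\trlow(\diam{A}\phi)=\diam{A}\trlow(\phi)$ keeps the ordinary imperfect-information next-step operator in the target language, hardness for coalitions of three or more agents is inherited directly from Proposition~\ref{prop:complexity-aemc}; your sketched encoding would only be required to witness hardness via the $\Always$/$\Until$ clauses alone, and as written it is too thin to carry that weight on its own.
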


Thus, our approximations potentially offer computational advantage\extended{ over direct \ATLir model checking in two cases:} when we consider coalitions whose members have similar knowledge, and especially when verifying abilities of individual agents.

\para{Approximation of abilities under perfect recall.}
In this paper, we focus on approximating abilities based on memoryless strategies.
Approximations might be equally useful for \ATLiR (i.e., the variant of \ATL using uniform perfect recall strategies); we simply begin with the problem that is easier in its exact form.
The high intractability of \ATLiR model checking suggests that a substantial extension will be needed to come up with satisfactory approximations.

\extended{
  Note, however, that if $A$ have a successful memoryless strategy, then they also have a winning perfect recall strategy.
  In consequence, as long as our lower bounds are correct for \ATLir, they are also {correct} for  variant of \ATL (i.e., \ATLiR).
  However, they do not find a perfect recall strategy if no winning memoryless strategy is available.

} 
We also observe that the benchmark in Section~\ref{sec:experiments-voting} is a \emph{model of perfect recall}, i.e., the states\extended{ of the model} explicitly encode the agents' memory of their past observations. In con\-sequence, the memoryless and perfect recall semantics of \ATL coincide. The experimental results suggest that, for such models, verification of perfect recall abilities can be much improved by using the approximations proposed here.

\newcommand{\phivotingthree}{\phi_1}
\newcommand{\phivotingfour}{\phi_2}

\extended{
  \begin{figure*}\centering
  \hspace{-0.75cm}
  \begin{tikzpicture}[>=latex,scale=1.15]
   \input{voting-new.tex}
  \end{tikzpicture}
  \caption{A voter module for experiments}
  \label{fig:votingmodel-new}
  \end{figure*}
} 

\begin{figure}[h]
\hspace{-0.4cm}\resizebox{1.1\columnwidth}{!}{%
\begin{tabular}{|c|HHHc|Hc|Hc|Hc|Hc|Hc|c|c|}
\hline
\multirow{2}{*}{$k$} & \multirow{2}{*}{btime}    & \multirow{2}{*}{\#bsat} & \multirow{2}{*}{rtime}    & \multirow{2}{*}{\#states}  & \multirow{2}{*}{eptime}   & \multirow{2}{*}{tgen}
  & \multicolumn{4}{c|}{Lower approx.\extended{ (\AEMC)}}           & \multicolumn{4}{c|}{Upper approx.\extended{ (\ATLIr)}}      &  \multirow{2}{*}{Match}   &    Exact   \\
\cline{8-15}
\cline{17-17}
  &  &  &  &  &  &  & \#iter & tverif     & \#sat   & result & \#iter & tverif   & \#sat   & result &  &  tg+tv \\
\hline
\hline
$1$           &  &  &  & {15}     &  & 0.001 &  & 0.0001 &  & True   &  & 0.00007 &  & True  & {100\%}   & 0.006         \\ \hline
$2$           &  &  &  & {225}    &  & 0.02  &  & 0.002  &  & True   &  & 0.001   &  & True  & {100\%}   & 14.79         \\ \hline
$3$           &  &  &  & {3375}   &  & 0.50  &  & 0.14   &  & True   &  & 0.03    &  & True  & {100\%}   & \tmout        \\ \hline
$4$           &  &  &  & {50625}  &  & 14.39 &  & 22.78  &  & True   &  & 0.77    &  & True  & {100\%}   & \tmout        \\ \hline
\end{tabular}%
}

\caption{Experimental results for simple voting model ($\phivotingthree$)}
\label{fig:resultsvoting-3}
\end{figure}

\begin{figure}[h]
\hspace{-0.4cm}\resizebox{1.1\columnwidth}{!}{%
\begin{tabular}{|c|HHHc|Hc|Hc|Hc|Hc|Hc|c|c|}
\hline
\multirow{2}{*}{$k$} & \multirow{2}{*}{btime}    & \multirow{2}{*}{\#bsat} & \multirow{2}{*}{rtime}    & \multirow{2}{*}{\#states}  & \multirow{2}{*}{eptime}   & \multirow{2}{*}{tgen}
  & \multicolumn{4}{c|}{Lower approx.\extended{ (\AEMC)}}           & \multicolumn{4}{c|}{Upper approx.\extended{ (\ATLIr)}}      &  \multirow{2}{*}{Match}   &    Exact   \\
\cline{8-15}
\cline{17-17}
  &  &  &  &  &  &  & \#iter & tverif     & \#sat   & result & \#iter & tverif   & \#sat   & result &  &  tg+tv \\
\hline
\hline
$1$           &  &  &  & {15}     &  & 0.001 &  & 0.00005 &  & False   &  & 0.00003 &  & False  & {100\%}   & 0.005        \\ \hline
$2$           &  &  &  & {225}    &  & 0.02  &  & 0.0005  &  & False   &  & 0.0003  &  & False  & {100\%}   & 0.02         \\ \hline
$3$           &  &  &  & {3375}   &  & 0.50  &  & 0.01    &  & False   &  & 0.007   &  & False  & {100\%}   & 0.04         \\ \hline
$4$           &  &  &  & {50625}  &  & 14.39 &  & 0.94    &  & False   &  & 0.12    &  & False  & {100\%}   & 0.12         \\ \hline
\end{tabular}%
}

\caption{Experimental results for simple voting model ($\phivotingfour$)}
\label{fig:resultsvoting-4}
\end{figure}

\section{Experimental Evaluation}\label{sec:experiments}

Theorem~\ref{prop:approximation} and Proposition~\ref{prop:approximation-complexity} validate the approximation semantics theoretically. In this section, we back up the theoretical results by looking at how well the approximations work in practice.
We address two issues: the \emph{performance} and the \emph{accuracy} of the approximations.

\subsection{Existing Benchmarks}\label{sec:existing}

%
The only publicly available tool that provides verification of \ATL with imperfect information is MCMAS~\cite{Lomuscio06uniform,Raimondi06phd,Lomuscio09mcmas,Lomuscio15mcmas}. We note, however, that imperfect information strategies are not really at the heart of the model-checker, the focus being on verification of \CTLK and \ATLK with perfect information strategies.
More dedicated attempts produced so far only experimental algorithms, with preliminary performance results reported in~\cite{Pilecki14synthesis,Busard14improving,Huang14symbolic-epist,Busard15reasoning,Pilecki17smc}.
Because of that, there are few benchmarks for model checking \ATLir, and few experiments have actually been conducted.

The classes of models typically used to estimate the performance of \ATLir model checking are \textbf{TianJi}~\cite{Raimondi06phd,Busard14improving} and \textbf{Castles}~\cite{Pilecki14synthesis}.
The properties to be verified are usually reachability properties, saying that Tian Ji can achieve a win over the king (in \textbf{TianJi}), or that a given coalition of workers can defeat another castle (for \textbf{Castles}).
We observe that both \textbf{TianJi} and \textbf{Castles} \emph{do not satisfy} the necessary condition in Proposition~\ref{prop:tight}.
This is because the states of the model do not encode some relevant information about the actions that have been already played by the coalition.
Thus, even one step before winning the game, the players take into account also some (possibly losing) states that couldn't be reached by the strategy that they are executing.

This means that the \AEMC approximations, proposed in this paper, are not useful for \textbf{TianJi} and \textbf{Castles}.
It also means that the benchmarks arguably do not capture realistic scenarios.
We usually do not want to assume agents to forget \emph{their own actions} from a few steps back.
In the remainder, we propose several new benchmarks that can be used to evaluate our approximation scheme.

Finally, we note that most experiments reported in the literature use very simple input formulae (no nested strategic modalities; singleton coalitions or groups of agents with identical indistinguishability relations).
As the results show, verification of such formulae is complex enough -- see the performance of exact model checking in the rest of this section.

\extended{
  our lower bound translation $\trlow(\phi)$ returns $\mathit{false}$ in most models of those two classes, regardless of the truth value of the original specification $\phi$. This is because formulae of \AEMC specify abilities that are \emph{recomputable} in the sense that the players can, at any stage, compute the remaining part of the winning strategy from the available observations.

  Note that in many cases we do \emph{not} want to reason about agents who forget relevant information. This is especially true in security applications, cf.~our simple voting example. When verifying existence of coercion strategies, we should not assume the attacker to be dumb and forgetful. In the remainder, we propose and employ a novel benchmark that shares much structural characteristics of the voting scenario.
} 

\subsection{Verifying the Simple Voting Scenario}\label{sec:experiments-voting}


For the first benchmark, we adapt the simple voting scenario from Example~\ref{ex:voting-model}.
The model consists of $k+1$ agents ($k$ voters $v_1,\dots,v_k$, and 1 coercer $c$).
The module of voter $v_i$ implements the transition structure from Figure~\ref{fig:votingmodel}, with three modifications. First, the voter can at any state execute the ``idle'' action \textit{wait} (this is needed to ensure uniformity of the resulting \ICGS).
In consequence, synchronous voting as well as interleaving of votes is allowed.
Secondly, in states $q_3,\dots,q_6$, the coercer's action $np$ (``no punishment'') leads to an additional final state ($q_7',\dots,q_{10}'$), labeled accordingly. Thirdly, the old and new leaves in the structure (i.e., $q_7,\dots,q_{10},q_7',\dots,q_{10}'$) are labeled with an additional atomic proposition \prop{finish_i}.

As specifications, we want to use the properties saying that:\
(i) the coercer can force the voter to vote for candidate 1 or else the voter is punished, and\
(ii) the voter can avoid voting for candidate 1 and being punished (cf.~Example~\ref{ex:voting-formulae}).
Note, however, that the model used for the experiments is an unconstrained product of the voter modules.
Thus, it includes also paths that were absent in the CEGS $\Mvote$ from Example~\ref{ex:voting-model} (in particular, ones where a voter executes \textit{wait} all the time).
To deal with this, we modify the specifications from Example~\ref{ex:voting-formulae} so that they discard such paths:
\begin{enumerate2}
\item $\phivotingthree \equiv \coop[\ir]{c}\Always \big((\prop{finish_i}\land\neg\prop{pun_i}) \then \prop{vote_{i,1}}\big)$ which always holds in the voting scenario,
\item $\phivotingfour \equiv \coop[\ir]{v_i}\Sometm \big(\prop{finish_i} \land \neg\prop{pun_i} \land \neg\prop{vote_{i,1}}\big)$ which is always false.
\end{enumerate2}


The results of experiments for\extended{ formula} $\phivotingthree$ are shown in Figure~\ref{fig:resultsvoting-3}, and for $\phivotingfour$ in Figure~\ref{fig:resultsvoting-4}.
The columns present the following information: parameter of the model (the number of voters $k$), size of the state space (\#states), generation time for models (tgen), time and output of verification (tver, result) for model checking the lower approximation $\trlow(\phi)$, and similarly for the upper approximation $\trup(\phi)$; the percentage of cases where the bounds have matched (match), and the total running time of the exact \ATLir model checking for $\phi$ (tg+tv). The running times are given in seconds. \emph{Timeout} indicates that the process did not terminate in 48 hours (!).

The computation of the lower and upper approximations was done with a straightforward implementation (in Python 3) of the fixpoint model checking algorithm for \AEMC and \ATLIr, respectively.
We used the explicit representation of models, and the algorithms were not optimized in any way.
The exact \ATLir model checking was done with MCMAS 1.2.2 in such a way that the underlying CEGS of the ISPL code was isomorphic to the explicit models used to compute approximations.
The subjective semantics of \ATLir was obtained by using the option \emph{-atlk 2} and setting the initial states as the starting indistinguishability class for the proponent.
All the tests were conducted on a PC with an Intel Core i5-2500 CPU with dynamic clock speed of 3.30 GHz up to 3,60 GHz, 8 GB of RAM (two modules DDR3, 1600 MHz bus clock), and Windows 10 (64bit).

\para{Discussion of results.}
Exact model checking with MCMAS performed well on the inputs where no winning strategy existed (formula $\phivotingfour$), but was very bad at finding the existing winning strategy for formula $\phivotingthree$.
In that case, our approximations offered huge speedup.
Moreover, the approximations actually found the winning strategy in all the tested instances, thus producing fully conclusive output.
This might be partly due to the fact that the scenario uses \emph{perfect recall models}, i.e., ones encoding perfect memory of players explicitly in their local states.

\begin{figure}[h]\centering
\includegraphics[width=6cm]{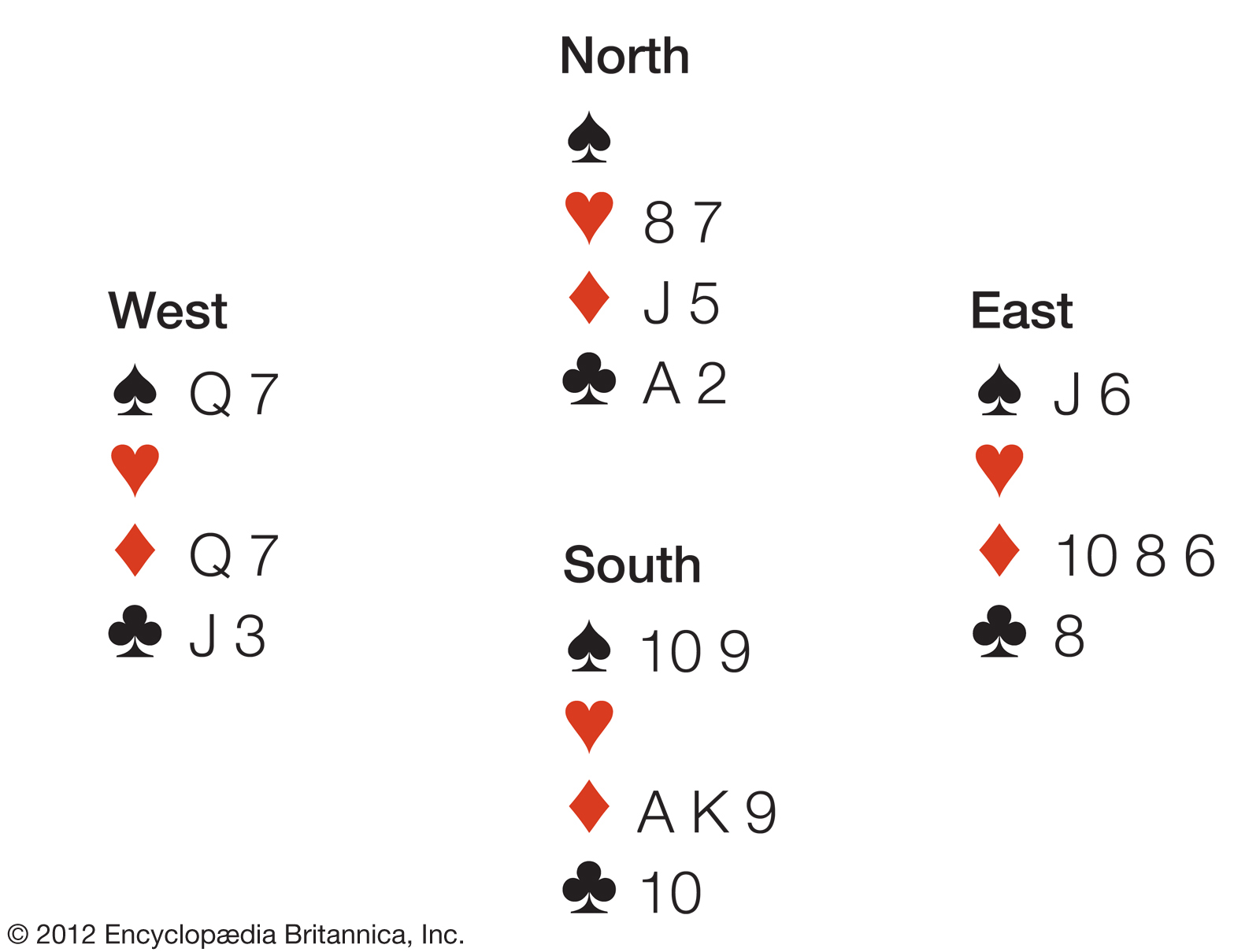}
\caption{Example 6-endplay in bridge: }
\label{fig:bridge}
\end{figure}

\subsection{Bridge Endplay}\label{sec:bridge}


We use bridge play scenarios of a type often considered in bridge handbooks and magazines. The task is to find a winning strategy for the declarer, usually depicted at the South position (\South), in the $k$-endplay of the game, see Figure~\ref{fig:bridge} for an example. The deck consists of $4n$ cards in total ($n$ in each suit),\footnote{
  In real bridge, $n=13$\extended{, but we keep it variable to study how the model checking algorithms will scale up for different sizes of the deck}. }
and the initial state captures each player holding $k$ cards in their hand, after having played $n-k$ cards. This way we obtain a family of models, parameterized by the possible values of $(n,k)$.
A NoTrump contract is being played; the declarer wins if she takes more than $k/2$ tricks in the endplay.

The players' cards are played sequentially (clockwise). \South plays first at the beginning of the game. Each next trick (i.e., the set of four played cards, one per player) is opened by the player who won the latest trick.
The declarer handles her own cards and the ones of the dummy (\North). The opponents (\West and \East) handle their own hands each. The cards of the dummy are visible to everybody; the other hands are only seen by their owners.
Each player remembers the cards that have already been played, including the ones that were used up before the initial state of the $k$-endplay.
That is, the local state of a player contains:
the current hand of the player,
the current hand of the dummy,
the cards from the deck that were already used up in the previous tricks,
the status of the current trick, i.e., the sequence of pairs \textit{(player,card)} for the cards already played within the trick (alternatively, the sequence of cards already played within the trick, plus who started the trick); and
the current score (which team has won how many tricks so far).
\extended{
  \begin{enumerate2}
  \item The current hand of the player,
  \item The current hand of the dummy,
  \item\label{it:deck} The cards from the deck that were already used up in the previous tricks,
  \item The status of the current trick, i.e., the sequence of pairs (player,card) for the cards already played within the trick (alternatively, the sequence of cards already played within the trick, plus who started the trick);
  \item The current score (which team has won how many tricks so far).
  \end{enumerate2}
  Note: since we will only look at the strategic abilities of the declarer (\South), the epistemic relations of the other players (\North, \West, \East) are irrelevant for our experiments.

}

We observe the following properties of the model. First, it is turn-based (with the ``idle'' action \textit{wait} that players use when another player is laying down a card).
Secondly, players have imperfect information, since they cannot infer (except for the last round) the hands of the other players.
The missing information is relevant: anybody who has ever played bridge or poker knows how much the limited knowledge of the opponents' hands decreases one's chances of winning the game.
Thirdly, this is a model of imperfect recall. The players do not remember in which order the cards have been played so far, and who had what cards;\footnote{
  This reflects the capabilities of middle-level bridge players: they usually remember what has been played, but not in which order and by whom. Advanced players remember also who played what, and masters remember the whole history of the play. }
formally: the model is a DAG and not a tree as there are histories $h\not\approx_a h'$ such that $last(h)\sim_a last(h')$).
Finally, the model is lockstep (everybody sees when a transition happens), and thus $\trlow_2$ and $\trlow_3$ coincide on singleton coalitions.

\begin{figure}[h]
\hspace{-0.6cm}\resizebox{1.15\columnwidth}{!}{%
\begin{tabular}{|@{\,}c@{\,}|HHHc|Hc|Hc|Hc|Hc|Hc|c|c|}
\hline
\multirow{2}{*}{$(n,k)$} & \multirow{2}{*}{btime}    & \multirow{2}{*}{\#bsat} & \multirow{2}{*}{rtime}    & \multirow{2}{*}{\#states}  & \multirow{2}{*}{eptime}   & \multirow{2}{*}{tgen}
  & \multicolumn{4}{c|}{Lower approx.\extended{ (\AEMC)}}           & \multicolumn{4}{c|}{Upper approx.\extended{ (\ATLIr)}}      &  \multirow{2}{*}{Match}   &    Exact   \\
\cline{8-15}
\cline{17-17}
  &  &  &  &  &  &  & \#iter & tverif     & \#sat   & \%true & \#iter & tverif   & \#sat   & \%true &  &  tg+tv \\
\hline
\hline
$(1,1)$           & {3.15e-05} & {2}      & {3.29e-04} & {11}      & {1.81e-04} & 0.0005 & 5      & 0.0001 & 11      & 100\%  & 5      & 7e-05 & 11      & 100\%  & {100\%}    & {0.14}   \\ \hline
$(2,2)$           & {9.05e-05} & {6}      & {1.15e-02} & {310}     & {5.37e-03} & 0.017 & 9      & 0.002 & 184     & 60\%   & 9      & 0.001 & 184     & 60\%   & {100\%}    & {2.42\,h}$^\star$
\\ \hline
$(3,3)$           & {3.38e-04} & {20}     & {0.65}     & {12626}   & {0.27}     & 0.92     & 13     & 0.16     & 5672    & 70\%   & 14     & 0.05 & 5929    & 70\%   & {100\%}    & \tmout       \\ \hline
$(4,4)$           & {1.74e-02} & {70}     & {31.22}    & {534722}  & {10.43}    & 41.66    & 19     & 172.07   & 225091  & 60\%   & 19     & 2.61     & 233520  & 60\%   & {100\%}    & \tmout       \\ \hline
$\;(5,5)^\star$           & {5.07e-02} & {252}    & {2449.45}  & {2443467} & {192.36}   & 2641.86  & 15     & 76 h     & 2443467 & 100\%   & 15     & 1929     & 2443467 & 100\%   & {100\%}    & \tmout       \\ \hline
\end{tabular}%
}

\caption{Experimental results: solving endplay in bridge}
\label{fig:resultsbridge}
\end{figure}

The results of the experiments for formula $\phi \equiv \coop[\ir]{\South}\Sometm\prop{win}$ are shown in Figure~\ref{fig:resultsbridge}.
The columns present the following information: parameters of the model $(n,k)$, size of the state space (\#states), generation time for models (tgen), time and output of verification (tver, \%true) for model checking the lower approximation $\trlow(\phi)$, and similarly for the upper approximation $\trup(\phi)$; the percentage of cases where the bounds have matched (match), and the total running time of the exact \ATLir model checking for $\phi$ (tg+tv). The times are given in seconds, except where indicated.

The experiments were run in the same environment as for the voting scenario in Section~\ref{sec:experiments-voting}.
Again, we ran the experiments for up to 48h per instance.
The results in each row are averaged over 20 randomly generated instances, except for ($\star$) where only 1\extended{ hand-picked???} instance was used\extended{, and ($\diamond$) where 2 instances were used}.

\extended{
  The column headers are interpreted as follows:
  \begin{itemize}
  \item btime: creation time of model beginning states,
  \item \#bsat: number of beginning states in model,
  \item rtime: creation time of rest states of model,
  \item \#msat: number of all states in model,
  \item eptime: creation time of epistemic relation in model,
  \item stime: creation time of whole model with epistemic relation,
  \item \#iter: number of iterations until reaching fixpoint,
  \item time: model checking time (in seconds),
  \item \#sat: number of states in which the formula is satisfied,
  \item result: formula result,
  \item match: percentage of times, when imperfect formula result was equal to perfect formula result,
  \item mcmas: time used by mcmas to cumpute answer under imperfect information. \textbf{tmout} denotes that MCMAS did not finish the computation (we ran the experiments with MCMAS for up to 48h per instance).
  \end{itemize}
} 

\para{Discussion of results.}
In the experiments, our approximations offered a dramatic speedup. Exact model checking of $\phi$ was infeasible except for the simplest models (hundreds of states), even with an optimized symbolic model checker like MCMAS. In contrast, the bounds were verified for models up to millions of states.
Moreover, our approximations obtained an astonishing level of accuracy: the bounds matched in 100\% of the analyzed instances, thus producing fully conclusive output.
This was partly because we only considered endplays in relatively small decks. The gap grows for decks of more than 20 cards (we verified that by hand on selected instances from bridge literature).

\extended{
  \begin{figure}[t]
  \hspace{-0.6cm}\resizebox{1.15\columnwidth}{!}{%
\begin{tabular}{|c|HHHc|Hc|Hc|Hc|Hc|Hc|c|H|}
\hline
\multirow{2}{*}{$(n,k)$} & \multirow{2}{*}{btime}    & \multirow{2}{*}{\#bsat} & \multirow{2}{*}{rtime}    & \multirow{2}{*}{\#states}  & \multirow{2}{*}{eptime}   & \multirow{2}{*}{tgen}
  & \multicolumn{4}{c|}{Lower approx.\extended{ (\AEMC)}}           & \multicolumn{4}{c|}{Upper approx.\extended{ (\ATLIr)}}      &  \multirow{2}{*}{Match}   &    Exact   \\
\cline{8-15}
\cline{17-17}
  &  &  &  &  &  &  & \#iter & tverif     & \#sat   & result & \#iter & tverif   & \#sat   & result &  &  tg+tv \\
\hline
\hline
$(1,1,1)$           &  &  &  & {508}    &  & 0.47  &  & 0.006   &  & False   &  & 0.19   &  & False  & {100\%}   &            \\ \hline
$(2,1,1)$           &  &  &  & {1019}   &  & 3.87  &  & 1.91    &  & False   &  & 4.83   &  & True   & {0\%}     &            \\ \hline
$(3,1,1)$           &  &  &  & {2042}   &  & 34.65 &  & 49.81   &  & False   &  & 126.88 &  & True   & {0\%}     &            \\ \hline
\end{tabular}%
}

  \caption{Experimental results for castles, formula castle3defeated}
  \label{fig:resultscastles}
  \end{figure}
} 

\begin{figure}[h]
\hspace{-0.6cm}\resizebox{1.15\columnwidth}{!}{%
\begin{tabular}{|c|HHHc|Hc|Hc|Hc|Hc|Hc|c|c|}
\hline
\multirow{2}{*}{$(n,k)$} & \multirow{2}{*}{btime}    & \multirow{2}{*}{\#bsat} & \multirow{2}{*}{rtime}    & \multirow{2}{*}{\#states}  & \multirow{2}{*}{eptime}   & \multirow{2}{*}{tgen}
  & \multicolumn{4}{c|}{Lower approx.\extended{ (\AEMC)}}           & \multicolumn{4}{c|}{Upper approx.\extended{ (\ATLIr)}}      &  \multirow{2}{*}{Match}   &    Exact   \\
\cline{8-15}
\cline{17-17}
  &  &  &  &  &  &  & \#iter & tverif     & \#sat   & \%true & \#iter & tverif   & \#sat   & \%true &  &  tg+tv \\
\hline
\hline
$(1,1)$           & {5.24e-05} & {2}      & {1.16e-03} & {19}     & {9.89e-06} & 0.001 & 2      & 0.0003 & 19    & 100\%  & 4      & 0.0003 & 19    & 100\%   & {100\%}    & {14.93\,h}$^\star$ \\ \hline
$(2,2)$           & {1.59e-04} & {6}      & {6.78e-02} & {774}    & {1.75e-04} & 0.07 & 2      & 0.01 & 383   & 40\%   & 6      & 0.02 & 410   & 50.00\% & {90\%}     & \tmout        \\ \hline
$(3,3)$           & {7.86e-04} & {20}     & {6.70}     & {51865}  & {9.85e-03} & 6.71     & 5      & 29.31    & 20757 & 65\%   & 11     & 2.45     & 23818 & 85\%    & {80\%}     & \tmout        \\ \hline
\end{tabular}%
}

\caption{Experimental results for absent-minded declarer}
\label{fig:resultsbridge-absentminded}
\end{figure}

\begin{figure}[H]
\hspace{-0.7cm}\resizebox{1.15\columnwidth}{!}{%
\begin{tabular}{|c|HHHc|Hc|Hc|Hc|Hc|Hc|c|c|}
\hline
\multirow{2}{*}{$(n,k)$} & \multirow{2}{*}{btime}    & \multirow{2}{*}{\#bsat} & \multirow{2}{*}{rtime}    & \multirow{2}{*}{\#states}  & \multirow{2}{*}{eptime}   & \multirow{2}{*}{tgen}
  & \multicolumn{4}{c|}{Lower approx.\extended{ (\AEMC)}}           & \multicolumn{4}{c|}{Upper approx.\extended{ (\ATLIr)}}      &  \multirow{2}{*}{Match}   &    Exact   \\
\cline{8-15}
\cline{17-17}
  &  &  &  &  &  &  & \#iter & tverif     & \#sat   & \%true & \#iter & tverif   & \#sat   & \%true &  &  tg+tv \\
\hline
\hline
$(1,1)$           &  &  &  & {19}     &  & 0.002 &  & 0.0001 &  & 0\%    &  & 0.0003 &  & 100\% & {0\%}   & {14.93\,h}$^\star$ \\ \hline
$(2,2)$           &  &  &  & {756}    &  & 0.08  &  & 0.003  &  & 0\%    &  & 0.03   &  & 95\%  & {5\%}   & \tmout        \\ \hline
$(3,3)$           &  &  &  & {55688}  &  & 9.99  &  & 0.09   &  & 0\%    &  & 2.35   &  & 70\%  & {30\%}  & \tmout        \\ \hline
\end{tabular}%
}

\caption{Absent-minded declarer, approximation $\trlow_2$}
\label{fig:resultsbridge-absentminded-classic}
\end{figure}

\subsection{Bridge Endplay by Absentminded Declarer}\label{sec:absentminded}

\wj{
* bridge with absent minded declarer (describe first!):
  not turn-based,
  not lockstep,
  weaker observational capabilities (far from perfect recall!),
  more transitions, as the SN cards can be played in any order, also simultaneously with the other players,
  overall more sophisticated play with (almost) the same state space.
}

In the bridge endplay models, the players always see when a move is made. Thus, for singleton coalitions, the steadfast next-time operator $\diamStead{a}$ coincides with the standard next-time abilities expressed by $\diam{a}$. In order to better assess the performance\extended{ of our lower bound}, we have considered a variant of the scenario where the declarer is absentminded and does not see the cards being laid on the table until the end of each trick. Moreover, she can play her and the dummy's cards at any moment, even in parallel with\extended{ one of} the opponents. This results in larger indistinguishability classes for \South, but also in a general increase of the number of states and transitions\extended{ in the model}.

The results of the experiments are shown in Figure~\ref{fig:resultsbridge-absentminded}. Note that, for this class of models, the bounds do not match as tightly as before. Still, the approximation was conclusive in an overwhelming majority of instances. Moreover, it grossly outperformed the exact model checking which was (barely) possible only in the trivial case of $n=1$.

The models are not turn-based, not lockstep, and not of perfect recall. Since they are not lockstep, approximations $\trlow_2$ and $\trlow_3$ do not have to coincide. In Figure~\ref{fig:resultsbridge-absentminded-classic}, we present the experimental results obtained with $\trlow_2$, which show that the improved approximation $\trlow_3$ provides tighter lower bounds also from the practical point of view.

\wj{Compare the approximations with and without steadfast next step}

\section{Conclusions}

\extended{
  Synthesis of winning strategies for imperfect information games is well known to be hard.
  Similarly, verification
} 
\short{Verification }of strategic properties in scenarios with imperfect information is difficult, both theoretically and in practice.
In this paper, we suggest that model checking of\extended{ strategic} logics like \ATLir can be in some cases obtained by computing an under- and an overapproximation of the \ATLir specification, and comparing if the bounds match.
In a way, our proposal is similar to the idea of may/must abstraction~\cite{Godefroid02abstraction,Ball06abstraction,Lomuscio16abstraction-atlk}, only 
{in our case the approximations are obtained by transforming formulae rather than models}.

We propose such approximations, prove their correctness, and show that, for singleton coalitions, their values can be computed in polynomial time.
We also propose novel benchmarks for experimental validation\extended{, designed so that they share characteristics with simple security scenarios}. Finally, we report very promising experimental results, in both performance and accuracy of the output.
To our best knowledge, this is the first successful attempt at approximating strategic abilities under imperfect information by means of fixpoint methods.

\smallskip\noindent\textbf{Acknowledgements.}
The authors acknowledge the support of the National Centre for Research and Development (NCBR), Poland, under the PolLux project VoteVerif (POL\-LUX-IV/1/2016).


\bibliographystyle{abbrv}
\bibliography{muapprox,wojtek,wojtek-own}


\end{document}